\newtheorem{theorem}{Theorem}
\newtheorem{lemma}[theorem]{Lemma}
\newtheorem{corollary}[theorem]{Corollary}
\newtheorem{remark}[theorem]{Remark}
\newcommand{\Z}{\mathbb{Z}}
\newcommand{\ve}{\varepsilon}
\newcommand{\eps}{\epsilon}
\newcommand{\lo}{\left( 1+ o(1) \right)}
\newcommand{\network}{graph\xspace}
\newcommand{\networks}{graphs\xspace}
\newcommand{\networkss}{graph's\xspace}
\newcommand{\Networks}{Graphs\xspace}
\newcommand\given[1][]{\:#1\vert\:}
\DeclarePairedDelimiter{\defaultDelim}{[}{]}
\DeclareMathOperator{\capPr}{\sf Pr}
\renewcommand{\Pr}[2][]{\capPr_{#1}\defaultDelim*{#2}}
\DeclareMathOperator{\capE}{\sf E}
\newcommand{\E}[2][]{\capE_{#1}\defaultDelim*{#2}}
\newcommand{\brac}[1]{\left(#1\right)}
\newcommand{\bfrac}[2]{\left(\frac{#1}{#2}\right)}
\DeclareMathOperator*{\argmax}{arg\,max}
\begin{document}
\title{Random Overlapping Communities: \\
Approximating Motif Densities of Large \Networks}
\author{ Samantha Petti\thanks{Georgia Tech, spetti@gatech.edu. 
Supported in part by an NSF graduate fellowship.}
\and Santosh Vempala \thanks{Georgia Tech, vempala@gatech.edu. Both authors were supported in part by NSF awards CCF-1563838 and CCF-1717349.}}
\maketitle

\begin{abstract}
A wide variety of complex networks  (social, biological, information etc.) exhibit local clustering with substantial variation in the clustering coefficient (the probability of neighbors being connected). Existing models of large \networks capture power law degree distributions (Barab\'asi-Albert) and small-world properties (Watts-Strogatz), but only limited clustering behavior. We introduce a generalization of the classical Erd\H{o}s-R\'enyi model of random \networks which provably achieves a wide range of desired clustering coefficient, triangle-to-edge and four-cycle-to-edge ratios for any given \network size and edge density. Rather than choosing edges independently at random, in the {\em Random Overlapping Communities} model, a \network is generated by choosing a set of random, relatively dense sub\networks (``communities''). We discuss the explanatory power of the model and some of its consequences. 
\end{abstract}

\section{Introduction}
Randomness has been an effective metaphor to model and understand the structure of complex networks. In 1959,  Erd\H{o}s and R\'enyi \cite{Erd59, Erd60} defined the simple random \network model $G_{n, p}$, where every pair of $n$ vertices is independently connected with probability $p$. Their seminal work transformed the field of combinatorics and laid the foundation of network science.  Mathematicians have extensively studied properties of \networks generated from this model and used it to prove the existence of \networks with certain properties. (See \cite{Fri15} for a survey.) The comparison of real-world \networks to $G_{n,p}$ is a popular tool for highlighting their nonrandom features \cite{Str01, New03, New05}. Moreover, the model has inspired more sophisticated random \network models, as predicted by  Erd\H{o}s and R\'enyi in the following remark from their pre-internet/pre-social \networks article:

{\em This may be interesting not only
from a purely mathematical point of view ...
if one aims at describing such a real situation,
one should replace the hypothesis of equiprobability of all connections by
some more realistic hypothesis.
It seems plausible that by considering the
random growth of more complicated structures 
one could
obtain fairly reasonable models of more complex real growth processes. 
}

 The two most influential random \network models designed to mimic properties of real-world \networks are the Watts-Strogatz {\em small world} model \cite{Wat98} and the Barab\'asi-Albert {\em preferential attachment} model \cite{Bar99}. Briefly, the first is a process that randomly rewires connections of a regular ring lattice \network. The resulting \networks have small diameter and high clustering coefficient (the probability that two neighbors of a randomly selected vertex are adjacent). The second is a growth model that repeatedly adds a new vertex to an existing \network and connects to existing vertices with probability proportional to their degree. This model exhibits and maintains a power law in the distribution of vertex degrees, another commonly observed phenomenon. 
 
These and other existing random \network models do not capture the following fundamental aspects of local structure: (1) Existing models cannot be tuned to produce \networks with arbitrary density, triangle-to-edge ratio, and four-cycle-to-edge ratio. (2) The clustering coefficients of \networks produced by existing models lie in very limited ranges determined by the \networkss density. In reality, the clustering coefficients of a variety of complex \networks (social, biological, information etc.) vary substantially and are not simply a function of the \networkss density \cite{New03}. 

We introduce the {\em Random Overlapping Communities (ROC)} model, a simple generalization of the Erd\H{o}s-R\'enyi model, which produces \networks with a wide range of clustering coefficients as well as triangle-to-edge and four-cycle-to-edge ratios. The model generates \networks that are the union of many relatively dense random communities. A {\em community} is an instance of $G_{s,q}$ on a set of $s$ randomly chosen vertices. A ROC \network is the union of many randomly selected communities that overlap, so  each vertex is a member of multiple communities.  The size $s$ and density $q$ of the communities determine clustering coefficient and triangle and four-cycle ratios. 

\paragraph{Capturing motif densities.}  A widely-used technique for inferring the structure and function of a \network is to observe overrepresented motifs, i.e., small patterns (sub\networks) that appear frequently. Recent work describes the overrepresented motifs of a variety of \networks including transcription regulation \networks, protein-protein interaction \networks, the rat visual cortex, ecological food webs, and the internet (WWW),  \cite{Yeg04, Alo07,Son05,Mil02}. The type of overrepresented motifs has been shown to be correlated with the \networkss function \cite{Mil02}. A model that produces \networks with high motif counts is necessary for approximating \networks whose function depends on the abundance of a particular motif. Here we focus on the two most basic motifs--- triangles and four-cycles.

A natural approach to constructing a \network with high motif density is to repeatedly add the motif on a randomly chosen subset of vertices. However, this process yields low motif to edge ratios for sparse \networks. For example, a \network on $n$ vertices with average degree less than $\sqrt{n}$ built by randomly adding triangles will have a triangle-to-edge-ratio at most 2/3. (See \Cref{just add tri}.) In \cite{New09} Newman considers a similar approach which produces \networks with varied degree sequences and triangle to edge ratio strictly less than 1/3. However, it is not hard to construct \networks with arbitrarily high triangle ratio (growing with the size of the \network). 

In the dense setting, a constant-size stochastic block model can be used to approximate \networks with high motif densities, as guaranteed by Szemer\'edi's regularity lemma (see \cite{Lov12}).  In a stochastic block model $M$, each vertex is assigned to one of $k$ classes, and an edge is added between each pair of vertices independently with probability $M_{i,j}$ where $i$ and $j$ are the classes of the vertices. However, the situation is drastically different for nondense \networks.
To construct a sparse \network with maximum degree at most $n^{1/3}$
with non-vanishing four-cycle density, the rank of $M$ must grow with the size of the \network. 
\begin{theorem}\label{block model}
	Let $M$ be a symmetric $n \times n$ matrix with entries in $[0,1]$ such that each row sum is at most $d$. Let $G$ be a graph on $n$ vertices obtained by adding each edge $(i,j)$ independently with probability $M_{ij}$. Then the expected number of $k$-cycles in $G$ at most $d^4 rank(M)$.
\end{theorem}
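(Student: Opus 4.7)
My plan is to convert the expected $k$-cycle count into the trace of $M^k$, and then control that trace using the symmetry of $M$ together with the row-sum hypothesis, which will force every nonzero eigenvalue to have absolute value at most $d$.

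\textbf{Step 1 (cycles to trace).} I would first express the expected number of $k$-cycles in $G$ as
\[
\frac{1}{2k} \sum_{\substack{v_0,\ldots,v_{k-1} \\ \text{pairwise distinct}}} \prod_{j=0}^{k-1} M_{v_j,\, v_{(j+1) \bmod k}},
\]
since each unordered $k$-cycle corresponds to exactly $2k$ ordered cyclic sequences of distinct vertices, and by independence of edge insertions the probability that a specified $k$-cycle appears in $G$ is the product of the $k$ relevant entries of $M$. Relaxing the distinctness constraint on the $v_j$ only enlarges the sum (all entries of $M$ are nonnegative), and the relaxed sum is exactly $\mathrm{tr}(M^k)$. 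Hence the expected number of $k$-cycles is at most $\mathrm{tr}(M^k)/(2k)$.

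\textbf{Step 2 (trace to rank).} Since $M$ is symmetric, its eigenvalues $\lambda_1,\ldots,\lambda_n$ are real and $\mathrm{tr}(M^k) = \sum_i \lambda_i^k$. At most $\mathrm{rank}(M)$ of the $\lambda_i$ are nonzero. I would then bound $|\lambda_i|\le d$ by the following standard argument: if $Mv = \lambda v$ and $|v_{i^\ast}| = \|v\|_\infty$, then $|\lambda|\,|v_{i^\ast}| \le \sum_j M_{i^\ast j}|v_j| \le d\|v\|_\infty$, so $|\lambda| \le d$. Combining these two facts gives $\mathrm{tr}(M^k) \le \mathrm{rank}(M)\cdot d^k$. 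Together with Step 1, and specializing to $k=4$ (the case of interest for the ensuing four-cycle discussion), this yields the stated bound $d^4\,\mathrm{rank}(M)$ (the leading $1/(2k)$ factor is absorbed).

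\textbf{Main obstacle.} The only bookkeeping subtlety is the mismatch in Step 1 between simple $k$-cycles and arbitrary closed walks of length $k$: the identity expressing $\mathrm{tr}(M^k)$ as a sum over closed walks includes walks that revisit vertices or edges. Since every entry of $M$ is nonnegative, these extra contributions only inflate the sum, so the inequality flows in the right direction and no delicate inclusion-exclusion or sign analysis is required. Beyond that, the proof reduces to a two-line spectral calculation, so I do not anticipate any serious difficulty.
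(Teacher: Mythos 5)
Your proposal is correct and follows essentially the same route as the paper: bound the (ordered) cycle sum by $\mathrm{Tr}(M^k)$, use symmetry to write the trace as a sum of at most $\mathrm{rank}(M)$ nonzero eigenvalue powers, and bound each eigenvalue by the maximum row sum $d$. The only differences are cosmetic refinements the paper omits --- the explicit $1/(2k)$ overcounting factor and the spelled-out $\|\cdot\|_\infty$ argument for $|\lambda_i|\le d$.
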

\noindent For example, the $d$-dimensional hypercube \network on $n=2^d$ vertices has a $\log(n)/4$ four-cycle-to-edge ratio; a stochastic block model $M$ that produces a \network of the same size, degree, and ratio must have rank at least $O(n/\log^2 n)$.

In contrast to the above approaches, the ROC model produces \networks with arbitrary triangle and four-cycle ratios independent of the density or size of the \network. In \Cref{real network bounds} we show that for almost all triangle and four-cycle ratios arising from some \network, there exists parameters for the ROC model to produce \networks with these ratios, {\em simultaneously}. Moreover, the vanishing set of triangle and four-cycle ratio pairs not achievable exactly can be approximated to within a small error. 

\paragraph{Clustering coefficient.} The clustering coefficient at a vertex $v$ is the probability two randomly selected neighbors are adjacent: $$C(v)= \frac{|\{\{a,b\} : a, b \in N(v), a\sim b\}|}{deg(v)(deg(v)-1)/2}.$$ Equivalently the clustering coefficient is twice the ratio of the number of triangles containing $v$ to the degree of $v$ squared. The ROC model is well suited to produce random \networks that reflect the high average clustering coefficients of real world \networks. Figure \ref{ccbar} illustrates the markedly high clustering coefficients of real-world \networks as compared with Erd\H{o}s-R\'enyi (E-R) \networks of the same density.  In \Cref{bounds}, we prove the average clustering coefficient of a ROC \network is approximately $sq^2/d$, meaning that tuning the parameters $s$ and $q$ with $d$ fixed yields wide range of clustering coefficients for a fixed density. Furthermore, \Cref{degree cc} describes the inverse relationship between degree and clustering coefficient in ROC \networks, a phenomena observed in protein-protein interaction \networks, the internet, and various social \networks \cite{Ste05, Mah06, Mis07, Yon07}.

\begin{figure}
	\centering
	\includegraphics[scale=.7]{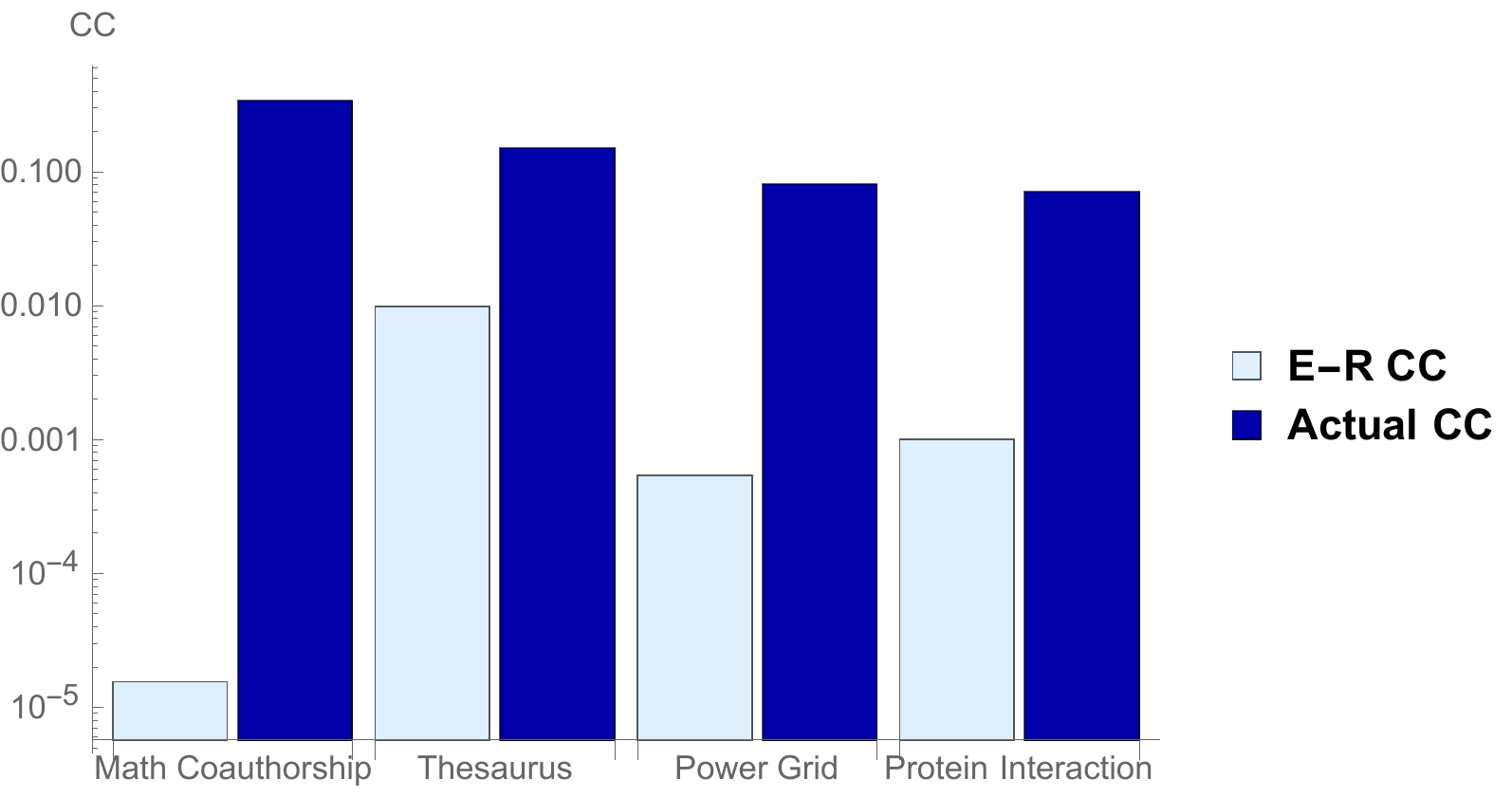}
	\caption{The clustering coefficient in real world \networks is much greater than that of an E-R random graph of the same density. Data from Table 3.1 of \cite{New03}.} \label{ccbar}
\end{figure}

\paragraph{Structure of the paper.} In \Cref{basic model} we introduce the ROC model, and then in \Cref{properties} we show the model's ability to produce \networks with specified size, density, triangle and four-cycle ratios and clustering coefficients. In \Cref{extension} we introduce a variation of the ROC model which produces \networks with various degree distributions and tunable clustering coefficient. We end with a discussion of the model's mathematical interest and explanatory value in real-world settings in Section \ref{discussion}.

\section{The Random Overlapping Communities model }\label{basic model}

A complex \network is modeled as the union of relatively dense, random communities. More precisely, to construct a \network on $n$ vertices with expected degree $d$, we pick $dn/(qs(s-1))$ random \networks, each of density $q$ on a random subset of $s$ of the $n$ vertices. 

\bigskip

\begin{centering}
\fbox{\parbox{0.95\textwidth}{
{\bf ROC($n,d,s,q$)}.  

Output: a \network on $n$ vertices with expected degree $d$.\\

Repeat $dn/(qs(s-1))$ times:
\begin{enumerate}
\item Pick a random subset $S$ of vertices (from $\{1,2,\ldots,n\}$) by selecting each vertex with probability $s/n$.
\item Add the random \network $G_{|S|,q}$ on $S$, i.e., for each pair in $S$, add the edge between them independently with probability $q$; if the edge already exists, do nothing.   
\end{enumerate}
}
}
\end{centering}

\bigskip

\begin{figure} 
\centering
\includegraphics[scale=.6]{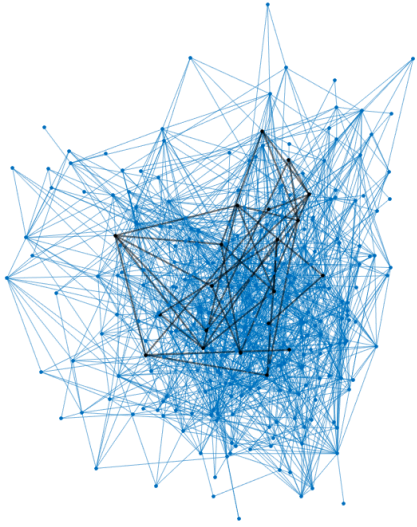}
\caption{In each step of the construction of a ROC($n,d,s,q$) \network, an instance of $G_{s,q}$ is added on a set of $s$ randomly selected vertices.}
\end{figure}

This generalizes the standard E-R model, which is the special case when $s=n$ and a single community is picked. For $G \sim ROC(n,d,s,q)$ the expected degree of each vertex is $d$.  If $d > sq \log{n}$ then with high probability $G$ will be connected. Moreover if $d/p> \log{\frac{nd}{s(s-1)p}}$, then with high probability the communities of $G$ will be connected even though there may be isolated vertices. See Section \ref{connectivity section} of the appendix for a further exploration of the connectivity properties of the ROC model.

\section{Approximation by ROC \networks}\label{properties}
In this section we analyze small cycle counts and local clustering coefficient of ROC \networks. For proofs of the theorems refer to Section \ref{proofs} of the appendix. 
We state our results as they hold asymptotically with respect to $n$.  

\subsection{Triangle and four-cycle count in ROC \networks.}
Define $R_k$ as the ratio between the number of $k$ cycles and the edges in a \network: $$R_k(G)=\frac{C_k(G)}{|E(G)|},$$ where $C_k(G)$ denotes the number of $k$ cycles in $G$. For $G \sim ROC(n,d,s,q)$, we instead define $$\overline{R}_k(G)=\frac{2\E{C_k(G)}}{nd},$$ the ratio of the expected number of $k$ cycles to the expected number of edges. 

\begin{lemma} \label{triangle} 
	Let $G \sim ROC(n,d,s,q)$  and $s = \omega(1)$. Then $$ \lim_{n \to \infty} \overline{R}_3(G)= \frac{sq^2}{3}  \mbox{ for } d=o(\sqrt{n}) \quad  \text{ and } \quad \lim_{n \to \infty} \overline{R}_4(G)= \frac{s^2q^3}{4} \mbox{ for } d=o(n^{1/3}).$$ 
	\end{lemma}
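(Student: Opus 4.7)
The proof should be a pure first-moment calculation: since $\overline{R}_k$ is defined as a ratio of expectations, no concentration is needed, and I only have to compute $\E[C_k(G)]$. By symmetry and linearity,
$$\E[C_k(G)] = c_k \binom{n}{k}\,\Pr[\text{a fixed $k$-cycle is present in } G],$$
with $c_3=1$ and $c_4=3$ counting the distinct $k$-cycles on $k$ labeled vertices. To evaluate this probability, I would use that the contributions of the $m=nd/(qs(s-1))$ communities to $G$ are mutually independent, since each community is a random vertex subset followed by an independent $G(\cdot,q)$ on that subset. For a fixed $k$-cycle to appear, each of its $k$ edges must be contributed by some community. I would decompose by \emph{witnesses}---ordered $k$-tuples $(i_1,\ldots,i_k)$ of communities, one per cycle edge---and classify them by the set partition of $\{1,\ldots,k\}$ induced by equalities among the $i_j$.

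The dominant term is the one-block partition: a single community contains all $k$ vertices and adds all $k$ cycle edges, with per-community probability $(s/n)^k q^k$. By Bonferroni, the probability that some community provides the whole cycle equals $m(s/n)^k q^k(1+o(1))$. Substituting $m=nd/(qs(s-1))$, multiplying by $c_k\binom{n}{k}$, and dividing by $nd/2$ yields the claimed limits $sq^2/3$ for $k=3$ and $s^2q^3/4$ for $k=4$; the hypothesis $s=\omega(1)$ is used to replace $s-1$ by $s$.

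What remains is to show that all other partition contributions are strictly subleading. For a partition of the $k$ edges into $r$ groups assigned to $r$ distinct communities, each community must contain the vertex set spanned by its group, so the contribution is proportional to $m^r (s/n)^{V} q^k$, where $V$ is the total number of vertex incidences. For triangles the non-trivial partitions yield errors of order $dsq/n$ (one community supplies two edges, another supplies the third) and $d^2/n$ (three distinct communities, one per edge) in $\overline{R}_3$; the latter vanishes exactly under $d=o(\sqrt n)$. For four-cycles an analogous enumeration, in which one must additionally track whether two edges assigned to the same community are adjacent or opposite in the cycle, produces a worst-case error of order $d^3/n$ from the all-singletons partition, controlled precisely by $d=o(n^{1/3})$.

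The main obstacle is the four-cycle case, where the partition case-analysis branches into several subtypes and one must verify that every non-trivial partition contributes $o$ of the main term $s^2q^3/4$ under $d=o(n^{1/3})$ and $s=\omega(1)$; the $s=\omega(1)$ assumption is what kills the intermediate partitions carrying extra factors of $1/s$. The matching lower bound requires no new ideas beyond applying truncated inclusion--exclusion to the single-block witness count, whose quadratic correction is negligible in both regimes.
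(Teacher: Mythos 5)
Your proposal is correct and follows essentially the same route as the paper: a first-moment computation of $\E{C_k(G)}$ that classifies contributions by which communities supply the cycle's edges, shows the single-community term gives the claimed $sq^2/3$ and $s^2q^3/4$ after normalizing by $nd/2$, and checks that the mixed-community terms (of order $dsq/n$, $d^2/n$ for triangles and up to $d^3/n$ for four-cycles) are negligible under $d=o(\sqrt n)$ and $d=o(n^{1/3})$. Your partition-of-edges bookkeeping is just a more systematic phrasing of the paper's case enumeration ($T_1,T_2,T_3$ and $F_1,\dots,F_4$); the only slight misstatement is that $s=\omega(1)$ serves to replace $s-1,s-2$ by $s$ in the leading term rather than to kill the intermediate partitions, which vanish for the density reasons you already identify.
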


By varying $s$ and $q$, we can construct a ROC \network that achieves any ratio of triangles to edges or any ratio of four-cycles to edges. By setting $s=\sqrt{\log(n)}/4$ and $q=1$, we obtain a family of \networks with the hypercube four-cycle-to-edge ratio $\log(n)/4$, something not possible with any existing random \network model. 

Moreover, it is possible to achieve a given ratio by larger, sparser communities or by smaller, denser communities. For example communities of size 50 with internal density 1 produce the same triangle ratio as communities of size 5000 with internal density 1/10. Figure \ref{sq vary} illustrates the range of $s$ and $q$ that achieve various triangle and four-cycle ratios. Note that it is possible to achieve $R_3=3$ and $R_4 \in \{ 100,50,25\}$ but not $R_3=3$ and $R_4 \in \{3, 10\}$.

\begin{figure}[h]
\centering
\begin{subfigure}{.5\textwidth}
  \centering
  \includegraphics[scale=.4]{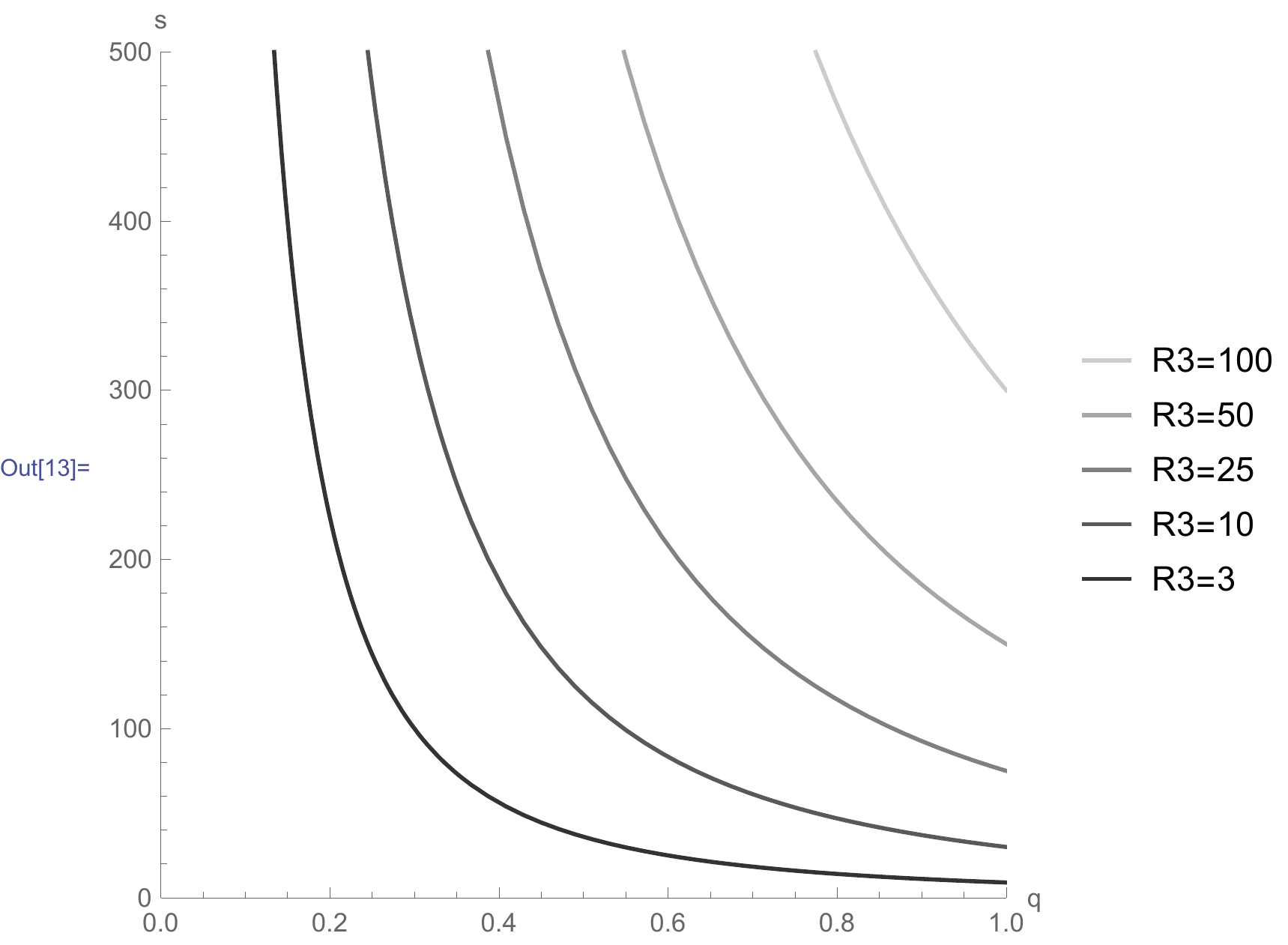}
  \label{fig:sub1}
\end{subfigure}%
\begin{subfigure}{.5\textwidth}
  \centering
  \includegraphics[scale=.4]{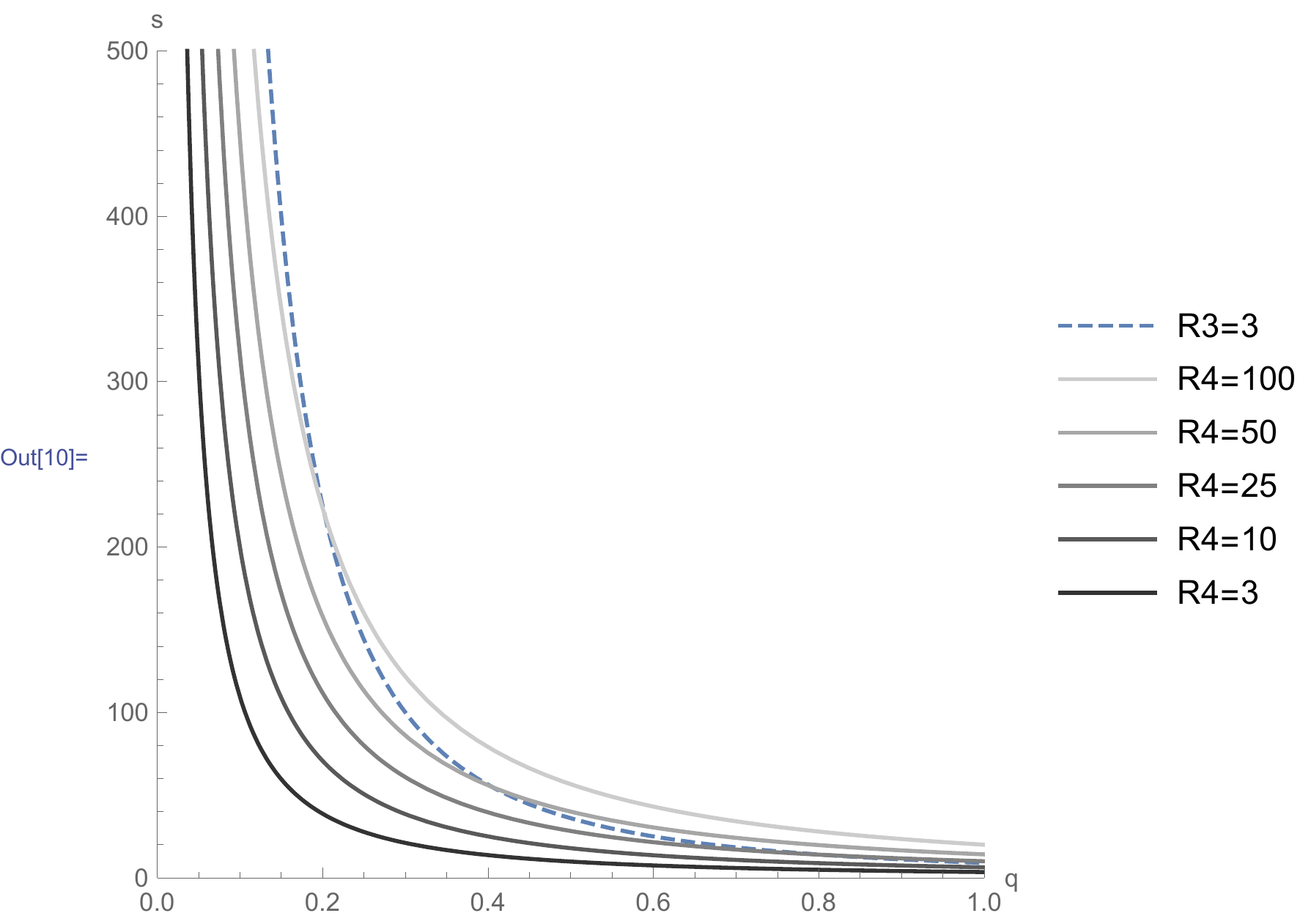}
  \label{fig:sub2}
\end{subfigure}
\caption{Left: A wide range of $s$ and $q$ yield the same $R_3$ and $R_4$ ratio (left and right respectively). }\label{sq vary}
\end{figure}

Next, we show that for almost all achievable pairs of triangle and four-cycle ratios, there exists a ROC construction that matches both ratios asymptotically. 

\begin{theorem}\label{real network bounds} The ROC model approximates most pairs of triangle and four-cycle ratios. 	
\begin{enumerate}
\item If there exists a \network $H$ with $R_3(H)=r_3$ and $R_4(H)=r_4$, then $3r_3(3r_3-1) \le 4r_4$. 
\item \label{match} For any $r_3$ and $r_4$ such that $9r_3^2\le 4r_4$, and $d= o(n^{1/3})$, the random \network\\
 $G \sim ROC\left(n,d, \frac{16r_4^2}{27r_3^3}, \frac{9r_3^2}{4r_4}\right)$ has $$\lim_{n \to \infty} \overline{R}_3(G)= r_3 \quad \text{ and } \quad  \lim_{n \to \infty} \overline{R}_4(G)= r_4.$$
\end{enumerate}
\end{theorem}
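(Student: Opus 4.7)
The plan is to prove Part~1 as a combinatorial inequality for any graph $H$ (independent of the ROC model), and Part~2 as an algebraic inversion of the expressions provided by Lemma~\ref{triangle}.

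For Part~1, I count triangles edge-by-edge. For each edge $e = uv$ of $H$, let $t_e := |N(u) \cap N(v)|$ be the number of triangles through $e$. Each triangle contributes to three edges, so $\sum_{e \in E(H)} t_e = 3 C_3(H)$. By Cauchy--Schwarz on the $m := |E(H)|$ terms, $\sum_e t_e^2 \geq 9 C_3^2 / m$. On the other hand, since $t_e^2 = t_e + 2\binom{t_e}{2}$, we have $\sum_e t_e^2 = 3 C_3 + 2 B$, where $B := \sum_e \binom{t_e}{2}$ counts unordered pairs of distinct triangles sharing an edge (``books''). The crux is the inequality $B \leq 2 C_4(H)$: map each book consisting of triangles $\{a,b,c\}$ and $\{a,b,d\}$ (with shared edge $ab$ and $c \neq d$) to the 4-cycle $a$-$c$-$b$-$d$-$a$, whose four required edges all lie in the two triangles. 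Given a 4-cycle $v_1 v_2 v_3 v_4$, any preimage book must use a diagonal as its shared edge, hence that diagonal is either $v_1 v_3$ or $v_2 v_4$; each diagonal actually present in $H$ gives exactly one preimage, so the multiplicity is at most $2$. Combining, $9 C_3^2/m \le 3 C_3 + 4 C_4$, and dividing by $m$ yields $3 r_3 (3 r_3 - 1) \le 4 r_4$.

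For Part~2, I invoke Lemma~\ref{triangle}, which for $d = o(n^{1/3})$ gives $\lim_n \overline{R}_3(G) = sq^2/3$ and $\lim_n \overline{R}_4(G) = s^2 q^3/4$. Setting these equal to $r_3$ and $r_4$ and eliminating $s$ (e.g., dividing the square of the first equation by the second) gives $q = 9 r_3^2/(4 r_4)$, and back-substitution yields $s = 16 r_4^2/(27 r_3^3)$. The hypothesis $9 r_3^2 \le 4 r_4$ is exactly the constraint $q \le 1$ required for $q$ to be a valid edge probability, and substituting these values of $s, q$ into Lemma~\ref{triangle} recovers the desired limits.

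The main obstacle is the combinatorial inequality $B \le 2 C_4$; once it is established, the rest of Part~1 is Cauchy--Schwarz and a routine expansion, and Part~2 reduces to algebra. A minor subtlety is that Lemma~\ref{triangle} is stated with $s = \omega(1)$, whereas the target $s = 16 r_4^2/(27 r_3^3)$ is typically constant in $n$; however, since the number of communities $dn/(qs(s-1))$ still grows with $n$, the concentration-of-expectation argument underlying Lemma~\ref{triangle} should yield the stated limits in the constant-$s$ regime as well.
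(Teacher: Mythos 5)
Your proposal matches the paper's proof in both parts: for Part~1 the paper uses the identical edge-by-edge double count, bounding $C_4(H)\ge \frac12\sum_{e}\binom{t_e}{2}$ via the same map sending two triangles on a shared edge $ab$ with apexes $c\neq d$ to the four-cycle $acbd$ (counted at most twice, once per diagonal), and then invokes convexity (``minimized when all $t_e$ are equal'') where you invoke Cauchy--Schwarz, which is the same step; for Part~2 the paper likewise just inverts the formulas of \Cref{triangle} and notes that $9r_3^2\le 4r_4$ is exactly the condition $q\le 1$. The constant-$s$ subtlety you flag is genuine and the paper's one-line proof silently applies \Cref{triangle} outside its stated $s=\omega(1)$ hypothesis; be aware, though, that the issue there is not concentration (only expectations enter $\overline{R}_k$) but the $s(s-1)\approx s^2$ approximations in the lemma's proof, which for fixed $s=\frac{16r_4^2}{27r_3^3}$ leave a residual factor $\frac{s}{s-1}$ in both limits, so your dismissal of the mismatch is not fully justified—but this gap is inherited from the paper rather than introduced by you.
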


For every \network with triangle and four-cycle ratios in the narrow range $3r_3(3r_3-1) \le 4r_4 \le 9r_3^2$, there exists a ROC construction that matches $r_3$ and can approximate $r_4$ by $9r_3^2$, i.e., up to an additive error $3r_3/4$ (or multiplicative error of at most $1/(3r_3 -1)$ which goes to zero as $r_3$ increases). 


\subsection{Clustering coefficient.}
\Cref{bounds} gives an approximation of the expected clustering coefficient when the degree and average number of communities per vertex grow with $n$. The exact statement is given in \Cref{exact} of Section \ref{proofs}, and bounds in a more general setting are given by \Cref{more general cc}.

\begin{theorem} \label{bounds} Let $C(v)$ denote the clustering coefficient of a vertex $v$ with degree at least 2 in a \network drawn from $ROC(n,d,s,q)$ with $d=o(\sqrt{n})$, $d<(s-1) q e^{sq}$, $d= \omega( sq \log\frac{nd}{s})$, $s^2q=\omega(1)$, and $sq =o(d)$. Then 
$$\E{C(v)}=\left(1+o(1) \right) \frac{s q^2}{d}.$$ 
\end{theorem}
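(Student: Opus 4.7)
The plan is to compute $\mathbb{E}[T(v)]$ (expected triangles through $v$) and to concentrate both $T(v)$ and $\deg(v)$ around their means, yielding $\mathbb{E}[C(v)] \approx 2\mathbb{E}[T(v)]/d^2$. Let $Y_v$ denote the number of sampled communities containing $v$; then $Y_v \sim \mathrm{Binomial}(K, s/n)$ with $K = dn/(qs(s-1))$, and mean $\lambda := d/(q(s-1))$. Within a community of size $|S|$ containing $v$, the expected number of triangles on $v$ internal to that community is $\binom{|S|-1}{2}q^3 \approx s^2 q^3/2$. Summing over the $Y_v$ communities and using concentration of $|S|$ around $s$ gives an expected intra-community triangle count of $(1+o(1))\lambda s^2 q^3/2 = (1+o(1))sq^2 d/2$. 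Under $d=o(\sqrt{n})$ and $sq=o(d)$, an argument parallel to the proof of \Cref{triangle} shows that triangles on $v$ whose three edges are not all inside a single community contribute only an $o(1)$ fraction, so $\mathbb{E}[T(v)] = (1+o(1))sq^2 d/2$. Since $\mathbb{E}[\deg(v)] = d$ by the choice of $K$, the target asymptotics follow once concentration is in hand.

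For concentration I would condition on the realized list of sampled communities. Given these, the edges incident to $v$ are mutually independent across $u \neq v$, with $u \sim v$ having probability $1 - (1-q)^{Z_{uv}}$, where $Z_{uv}$ counts communities containing both $u$ and $v$. A Chernoff bound then handles $\deg(v)$ directly: each community contributes at most roughly $sq$ neighbors, so the natural deviation exponent is $d/(sq)$, which the hypothesis $d = \omega(sq \log(nd/s))$ forces to dominate the union-bound cost over vertices. An analogous argument applies to $T(v)$: up to lower-order cross-community contributions, $T(v) = \sum_C T_C(v)$ where $T_C(v)$ counts triangles through $v$ inside community $C$, and the condition $s^2 q = \omega(1)$ ensures each $T_C(v)$ has mean $\omega(q)$, so a Bernstein-type inequality gives $T(v) = (1+o(1))\mathbb{E}[T(v)]$ with probability $1 - o(1)$.

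Combining on the joint high-probability event, $C(v) = 2T(v)/(\deg(v)(\deg(v)-1)) = (1+o(1))sq^2/d$; off this event $C(v) \leq 1$ contributes only $o(1)$ to the expectation, hence $\mathbb{E}[C(v)] = (1+o(1))sq^2/d$. The main obstacle will be the concentration of $\deg(v)$: one must carefully control pairs $(u,v)$ lying in multiple communities together, since such repeated edges cause the distinct-neighbor count to fall below the naive aggregate $\lambda(s-1)q$. The role of the condition $d < (s-1)q e^{sq}$, equivalently $\lambda < e^{sq}$, is precisely to prevent the number of communities containing $v$ from being large enough to saturate $\deg(v)$ well below $d$; verifying this cleanly requires an inclusion-exclusion (or $1 - e^{-Z_{uv} q}$) expansion of the distinct-neighbor count, which I expect to be the most delicate step of the argument.
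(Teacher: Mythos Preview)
Your approach is genuinely different from the paper's, and as written it has a real gap.

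The paper never concentrates $T(v)$. Instead it conditions on $M=k$, the number of communities containing $v$, and computes
\[
\E{C(v)\mid M=k}=\E{\frac{q\sum_{i=1}^k X_i(X_i-1)}{\bigl(\sum_{i=1}^k X_i\bigr)^2}},
\]
where $X_i\sim\mathrm{Bin}(s-1,q)$ is the degree of $v$ in the $i$th community. Expanding $X_1=\sum_j y_j$ with $y_j$ Bernoulli and replacing the other $X_i$ by their means, this evaluates to $\frac{s(s-1)q^3k}{(sqk+2-2q)^2}$, which lies between $q/(k+1)$ and $q/k$. Then the paper integrates over $k$ using closed-form bounds for $\E{1/M}$ and $\E{1/(M+1)}$ with $M$ binomial of mean $\lambda=d/((s-1)q)$, obtaining $(1+o(1))q/\lambda=(1+o(1))sq^2/d$. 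No concentration of the triangle count is needed.

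Your concentration step for $T(v)$ fails in general. The hypotheses do \emph{not} force $\E{T(v)}\to\infty$. Take $s=n^{0.4}$, $q=n^{-0.3}$, $d=n^{0.15}$: then $sq=n^{0.1}$, $s^2q=n^{0.5}$, $d=o(\sqrt{n})$, $sq=o(d)$, $d=\omega(sq\log(nd/s))$, and $d<(s-1)qe^{sq}$ all hold, yet
\[
\E{T(v)}\;\approx\;\tfrac12 sq^2 d\;=\;\tfrac12 n^{-0.05}\;\longrightarrow\;0.
\]
Hence $T(v)=0$ with probability $1-o(1)$, and your assertion ``$T(v)=(1+o(1))\E{T(v)}$ with probability $1-o(1)$'' is false; on the high-probability event one gets $C(v)=0$, not $(1+o(1))sq^2/d$. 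Your justification that $s^2q=\omega(1)$ makes $\E{T_C(v)}=\omega(q)$ (which is actually $(sq)^2q/2$, and $(sq)^2\to\infty$ does follow from the hypotheses) is not enough for Bernstein: what you need is $\lambda\cdot\E{T_C(v)}=\E{T(v)}\to\infty$, and that is simply not guaranteed.

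Your route is salvageable, but only if you drop the concentration of $T(v)$ and use concentration of $\deg(v)$ alone. Under the stated hypotheses $\deg(v)$ concentrates around $d$ with failure probability $e^{-\Omega(\lambda)}=n^{-\omega(1)}$; on that event $\binom{\deg(v)}{2}=(1+o(1))d^2/2$, so
\[
\E{C(v)\mathbb{1}_A}=(1+o(1))\frac{2}{d^2}\,\E{T(v)\mathbb{1}_A},
\]
and the crude bounds $T(v)\le\binom{\deg(v)}{2}\le n^2$ and $C(v)\le1$ together with $\Pr{A^c}=n^{-\omega(1)}$ kill both $\E{T(v)\mathbb{1}_{A^c}}$ and $\E{C(v)\mathbb{1}_{A^c}}$. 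That argument works, but it is not the one you outlined. Finally, the condition $d<(s-1)qe^{sq}$ is not there to prevent degree saturation; in the paper it is used (together with $d=\omega(sq\log(nd/s))$) to ensure a.a.s.\ there are no vertices of degree less than $2$, so that $C(v)$ is defined for almost every vertex.
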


Unlike in E-R \networks in which local clustering coefficient is independent of degree, higher degree vertices in ROC \networks have lower clustering coefficient. High degree vertices tend to be in more communities, and thus the probability two randomly selected neighbors are in the same community is lower.  Figure \ref{compare oc er} illustrates the relationship between degree and clustering coefficient, the degree distribution, and the clustering coefficient for two ROC \networks with different parameters and the E-R random \network of the same density.

\begin{theorem} \label{degree cc}
 	Let $C(v)$ denote the clustering coefficient of a vertex $v$ in a \network drawn from $ROC(n,d,s,q)$ with $d=o(\sqrt{n})$, $s=\omega(1)$ and $deg(v) \geq 2sq$. Then 
$$\E{C(v) \given deg(v)=r}= \frac{sq^2}{r} \left(1 +o_r(1) \right) $$
 	\end{theorem}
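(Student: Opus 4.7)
The plan is to show that conditional on $deg(v)=r$, the number $K$ of communities containing $v$ concentrates around $r/(sq)$, and that each such community contributes a predictable number of triangles at $v$. For each community $S_i$ containing $v$, let $N_i$ denote the number of neighbors of $v$ added through $S_i$, so that $N_i$ given $|S_i|$ is $\mathrm{Bin}(|S_i|-1,q)$ with $|S_i|-1\sim\mathrm{Bin}(n-1,s/n)$ and $\E{N_i}=(1+o(1))sq$. Every neighbor of $v$ arrives through at least one $S_i$, so $deg(v)=\bigl|\bigcup_i N(v;S_i)\bigr|$, and the expected overlap $N(v;S_i)\cap N(v;S_j)$ for a fixed pair of communities is $O(s^2q^2/n)$. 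Summed over $\binom{K}{2}$ pairs this is $O(K^2 s^2q^2/n)$, which under $d=o(\sqrt{n})$ is of smaller order than a single $N_i\approx sq$; hence with high probability $deg(v)=(1+o(1))\sum_i N_i$.

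Next, decompose $T(v)=T_{\mathrm{in}}(v)+T_{\mathrm{out}}(v)$, where $T_{\mathrm{in}}(v)$ counts triangles $(v,a,b)$ with $\{v,a,b\}$ contained in some common $S_i$. The expected within-community triangle count at $v$ inside $S_i$ is $\binom{|S_i|-1}{2}q^3\approx s^2q^3/2$, so $\E{T_{\mathrm{in}}(v)\mid K}=(1+o(1))\,K\cdot s^2q^3/2$, with the correction for triples covered by more than one community being of lower order by the same sparsity estimate. For $T_{\mathrm{out}}(v)$, the edge $(a,b)$ must come from a community missing $v$, forcing three essentially independent community-selection events; a calculation analogous to the one behind \Cref{triangle} gives $\E{T_{\mathrm{out}}(v)}=o(d\,sq^2)$, so its conditional contribution is negligible relative to the target $rsq^2/2$.

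Finally, since $s=\omega(1)$ and $r\ge 2sq$, letting $r\to\infty$ forces $K\to\infty$, and a Chernoff / law-of-large-numbers bound on the bounded independent $N_i$'s yields $\sum_iN_i=K\,sq\,(1+o_r(1))$ with probability $1-o(1)$; combined with the overlap bound above this forces $K=(1+o_r(1))\,r/(sq)$. Substituting,
\[
\E{T(v)\mid deg(v)=r}=(1+o_r(1))\cdot\frac{r}{sq}\cdot\frac{s^2q^3}{2}=(1+o_r(1))\cdot\frac{r\,sq^2}{2},
\]
and dividing by $\binom{r}{2}$ gives $\E{C(v)\mid deg(v)=r}=(sq^2/r)(1+o_r(1))$, as claimed. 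The main obstacle is making the conditional concentration of $K$ uniform in $r$: for $r$ far from the unconditional typical degree $d$, one must rule out atypical scenarios in which, say, few communities each contribute many neighbors through favorable binomial deviations. Because $deg(v)$ depends on $K$ essentially multiplicatively through $\sum_i N_i$, this should reduce to separate binomial tail bounds on $K$ and on the sum $\sum_i N_i$, but stitching these together to give a single $(1+o_r(1))$ factor that holds across the relevant range of $r$ will require some care.
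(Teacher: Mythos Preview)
Your high-level strategy matches the paper's: condition on the number $K$ of communities containing $v$, argue that $K$ concentrates near $r/(sq)$ under the conditioning $deg(v)=r$, and read off the clustering coefficient from that. The obstacle you flag in your final paragraph, however, is exactly the heart of the argument, and the paper resolves it by a different and more direct mechanism than ``separate binomial tail bounds.''

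The paper writes the posterior out explicitly. Up to factors independent of $k$,
\[
\Pr{K=k\mid deg(v)=r}\ \propto\ \Pr{K=k}\,\Pr{deg(v)=r\mid K=k}\ \propto\ k^{\,r-k}\,e^{-ak},
\]
with $a=qs-\ln\bigl(d/((s-1)q)\bigr)$. A standalone lemma then shows that this one-variable density is logconcave, locates its maximizer at $z\approx r/(sq)$, and proves the tail bound $\Pr{|k-z|\ge t\sqrt{z}}\le e^{-t+1}$. This controls the posterior of $K$ uniformly over all $r$, including $r$ far from the unconditional mean $d$, because the shape of the posterior itself is analyzed rather than relying on unconditional concentration of $K$ and of $\sum_i N_i$ separately. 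Your proposed stitching of two marginal Chernoff bounds would not directly yield this: when $\{deg(v)=r\}$ is exponentially unlikely, ruling out the competing rare scenario (few communities each contributing anomalously many neighbors) forces you to compare the two tails, which is precisely what the explicit posterior calculation does.

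A secondary difference: rather than decomposing $T(v)=T_{\mathrm{in}}+T_{\mathrm{out}}$ and estimating $\E{T(v)\mid K}$, the paper computes $\E{C(v)\mid deg(v)=r,\,K=k}=q/k$ in one line (two uniformly chosen neighbors of $v$ lie in the same community with probability $1/k$ and are then adjacent with probability $q$). This reduces the whole problem to evaluating $q\,\E{1/K\mid deg(v)=r}$, after which the logconcave tail bound gives $\E{1/K}=(1/z)\bigl(1+O(1/\sqrt{z})\bigr)=(sq/r)(1+o_r(1))$. Your triangle decomposition would also reach the answer, but you would still need the same posterior-concentration step to convert $K$ into $r/(sq)$.
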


\begin{figure}
\centering
\includegraphics[width=\textwidth]{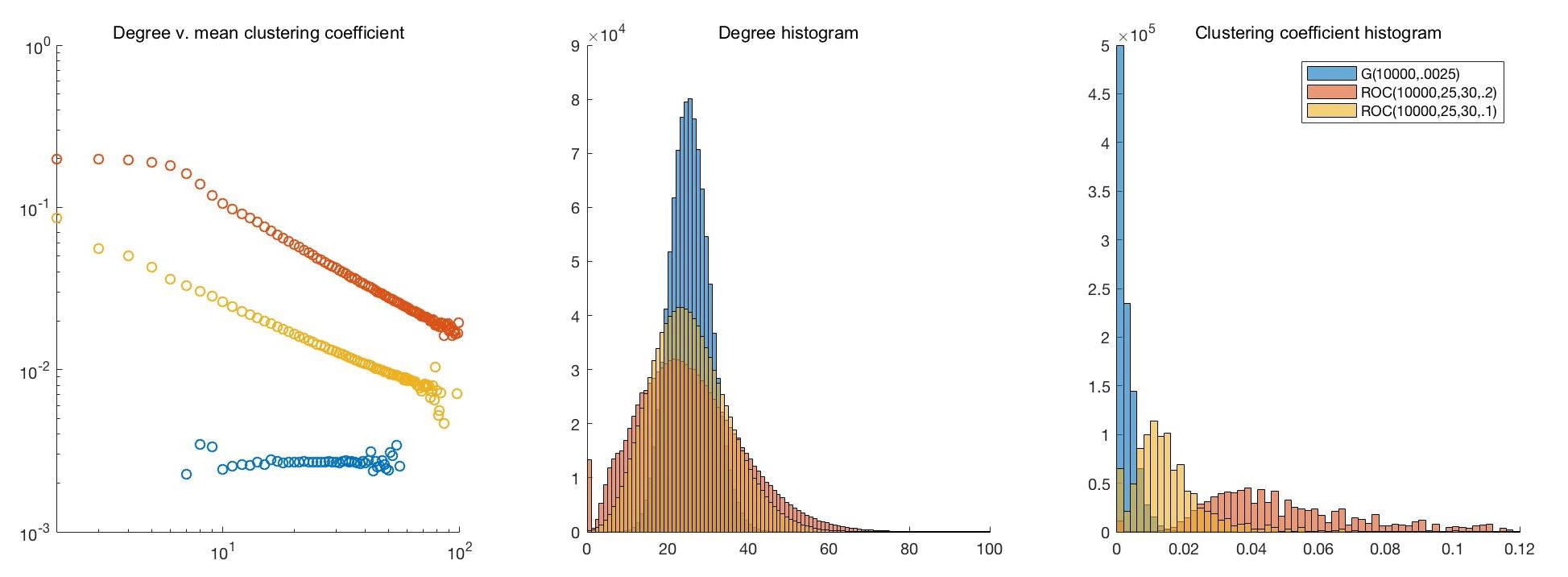}
\caption{A comparison of the degree distributions and clustering coefficients of 100 \networks with average degree 25 drawn from each $G_{10000,0.0025}$, $ROC(10000,25,30,0.2)$, and $ROC(10000,25,30,0.1)$. The mean clustering coefficients are $0.00270$, $0.06266$, and $0.01595$ respectively.  \label{compare oc er}}
\end{figure}

\begin{remark}  \label{the remark} The dependence between degree and clustering coefficient is the result of the variation in the numbers of communities a vertex is part of. To eliminate this variation and obtain a clustering coefficient distribution that is not highly dependent on degree, we can modify the $ROC$ construction as follows. Instead of selecting $s$ vertices uniformly at random to make up a community in each step, pre assign each vertex to precisely $\frac{d}{sq}$ communities of size $s$. In this setting the expected clustering coefficient can easily be computed: \begin{align*}\E{C(v)}= \Pr{\text{ two randomly selected nhbs are from the same community }}q
=\frac{sq^2}{d}.\end{align*}
Note also, that this variant of the ROC model will produce \networks with fewer isolated vertices. 
\end{remark}

\section{Diverse degree distributions and the DROC model }\label{extension}

In this section we introduce an extension of our model which produces \networks that match a target degree distribution in expectation. The extension is inspired by the Chung-Lu configuration model: given a degree sequence $d_1, \dots d_n$, an edge is added between each pair of vertices $v_i$ and $v_j$ with probability $\frac{d_i d_j}{\sum_{i=1}^n d_i}$, yielding a \network where the expected degree of vertex $v_i$ is $d_i$ \cite{Chu02}. In the  DROC model,  a modified Chung-Lu random \network is placed instead of an E-R random \network in each iteration. Instead of normalizing the probability an edge is selected in a community by the sum of the degrees in the community, the normalization constant is the expected sum of the degrees in the community. We use $D$ to denote a target degree sequence $t(v_1), \dots t(v_n)$, and $d$ to denote the mean. \bigskip

\begin{centering}
\fbox{\parbox{0.95\textwidth}{
{\bf DROC($n,D,s,q$)}.  

Output: a \network on $n$ vertices where vertex $v_i$ has expected degree $t(v_i)$.\\

Repeat $n/((s-1)q)$ times:
\begin{enumerate}
\item Pick a random subset $S$ of vertices (from $\{1,2,\ldots,n\}$) by selecting each vertex with probability $s/n$.
\item Add a modified C-L random \network on $S$, i.e., for each pair in $S$, add the edge between them independently with probability 
$ \frac{qt(v_i) t(v_j)}{sd} $; if the edge already exists, do nothing.   
\end{enumerate}
}
}
\end{centering}

\bigskip

\begin{theorem}\label{E1} Given a degree distribution $D$ with mean $d$ and $\max_i t(v_i)^2 \leq \frac{sd}{q}$,  DROC($n,D,s,q$) yields a \network where vertex $v_i$ has expected degree $t(v_i)$. \end{theorem}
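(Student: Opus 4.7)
The plan is a direct computation by linearity of expectation, summing over iterations of the outer loop and potential neighbors $v_j$. First I would analyze a single iteration. For the edge $(v_i,v_j)$ to be placed in that iteration, three independent events must occur: $v_i\in S$ (probability $s/n$), $v_j\in S$ (probability $s/n$), and the modified Chung--Lu step draws the edge (probability $qt(v_i)t(v_j)/(sd)$). The hypothesis $\max_i t(v_i)^2 \le sd/q$ is exactly what is needed to guarantee this last quantity is a valid probability. Multiplying the three factors, the per-iteration probability that $(v_i,v_j)$ is added equals $\frac{sq\,t(v_i)t(v_j)}{n^2d}$.

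Next I would sum over the possible neighbors $j\ne i$, using the key identity $\sum_{j\ne i} t(v_j) = nd - t(v_i)$ that comes from the fact that the mean of the degree distribution is $d$. The expected number of edges incident to $v_i$ added in a single iteration is then $\frac{sq\,t(v_i)(nd-t(v_i))}{n^2 d}$. Multiplying by the number of iterations $m = n/((s-1)q)$, the expected number of edge-insertions at $v_i$ simplifies cleanly to
\[
\frac{s}{s-1}\cdot\frac{nd-t(v_i)}{nd}\cdot t(v_i),
\]
which equals $t(v_i)$ up to the two $(1+o(1))$ correction factors $s/(s-1)$ and $(nd-t(v_i))/(nd)$ in the stated regime.

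The only real subtlety, and the main step that deserves justification rather than bookkeeping, is the distinction between edge-insertions (which is what the computation above counts) and distinct edges in the final simple graph, since duplicate insertions across iterations collapse under the union. The hypothesis $t(v_i)^2 \le sd/q$ forces the per-pair expected number of insertions $\frac{s\,t(v_i)t(v_j)}{(s-1)nd}$ to be at most on the order of $1/(s-1)$, which is small, so an inclusion--exclusion bound (or the inequality $1-(1-p)^m \ge mp - \binom{m}{2}p^2$) shows the double-counting correction is second order and is absorbed into the $1+o(1)$ factor. This yields the claimed expected degree $t(v_i)$.
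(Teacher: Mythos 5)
Your computation is correct and is essentially the paper's own (implicit) argument: the paper never writes a standalone proof of Theorem~\ref{E1} --- the sentence following it only explains that $\max_i t(v_i)^2\le sd/q$ makes the edge probabilities valid, and the same linearity-of-expectation calculation appears inside the proof of Theorem~\ref{ex cc}, where conditional on membership in $k$ communities one gets $\deg(v)\sim Bin\left((s-1)k,\tfrac{tq}{s}\right)$ and the expected community count $\tfrac{s}{(s-1)q}$ makes the expected degree $t(v)$. Your per-pair, per-iteration bookkeeping gives the same $\tfrac{s}{s-1}\cdot\tfrac{nd-t(v_i)}{nd}\,t(v_i)=(1+o(1))t(v_i)$, and you are in fact more careful than the paper in addressing the collapse of duplicate insertions, which the paper ignores. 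One small inaccuracy: the hypothesis does not force the per-pair expected number of insertions $\tfrac{s\,t(v_i)t(v_j)}{(s-1)nd}$ to be $O(1/(s-1))$; it only gives the bound $\tfrac{s^2}{(s-1)nq}$, which is small only when $s^2=o(nq)$ (or under similar mild asymptotics on $s,q,d,n$ that the paper implicitly assumes). This does not change your conclusion that the double-counting correction is second order in the intended regime, but the quantitative claim as you stated it is not a consequence of $\max_i t(v_i)^2\le sd/q$ alone.
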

 \noindent We require  $\max_i t(v_i)^2 \leq \frac{sd}{q}$ to ensure that the probability each edge is chosen is at most 1.

\begin{remark}
	Instead of requiring a sequence of $n$ target degrees as input to the DROC model, we can define the model with a distribution $\mathcal{D}$ of target degrees. In this altered version, Step 0 of the algorithm is to select a target degree for each vertex according to $\mathcal{D}$.
\end{remark}

\begin{remark} 
	Taking the distribution $D_d$ with $t(v)=d$ for all $v$ in the DROC model does not yield $ROC(n,d,s,q)$. The model $DROC(n, D_d, s,q)$ is equivalent to $ROC(n,d,s, \frac{qd}{s})$.
\end{remark}

The following corollary shows that it is possible to achieve a power law degree distribution with the DROC model for power law parameter $\gamma>2$. We use $\zeta(\gamma)=\sum_{n=1}^\infty n^{-\gamma}$ to denote the Riemann zeta function. 
\begin{corollary} \label{power law}
	Let $D\sim \mathcal{D_\gamma}$ be the power law degree distribution defined as follows: $$ \Pr{ t(v_i)=k}= \frac{k^{-\gamma}}{\zeta(\gamma)},$$ for all $1 \leq i \leq n$. If $\gamma>2$ and 
	$$\frac{s}{q} =\omega(1)\frac{\zeta(\gamma)}{\zeta(\gamma-1)} n^{\frac{1}{\gamma-1}},$$ then with high probability $D$ satisfies the conditions of \Cref{E1}, and therefore can be used to produce a DROC \network. 
\end{corollary}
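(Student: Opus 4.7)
The plan is to verify, with high probability over the random draw of $D = (t(v_1),\dots,t(v_n))$, both hypotheses of \Cref{E1}: that the empirical mean $d = \frac{1}{n}\sum_i t(v_i)$ concentrates (so that the DROC probability formula is well defined) and that $\max_i t(v_i)^2 \leq sd/q$. My strategy is to compute the distributional mean, bound the maximum by a union bound on the power-law tail, and then combine the two bounds.

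A direct calculation gives $\mu := \E{t(v_i)} = \zeta(\gamma)^{-1}\sum_{k\geq 1} k^{1-\gamma} = \zeta(\gamma-1)/\zeta(\gamma)$, finite precisely because $\gamma > 2$. For the tail, integral comparison yields $\Pr{t \geq M} = \Theta\brac{M^{-(\gamma-1)}/\zeta(\gamma)}$, and so by a union bound $\Pr{\max_i t(v_i) \geq M} \leq n\,\Pr{t \geq M} = O(n M^{-(\gamma-1)})$. Choosing $M = h(n)\cdot n^{1/(\gamma-1)}$ for any $h(n)\to\infty$ drives this to $o(1)$; hence whp $\max_i t(v_i) \leq h(n)\cdot n^{1/(\gamma-1)}$. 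Concentration $d = (1+o(1))\mu$ whp then follows by truncating $t$ at $M$ (so that the truncated second moment is finite even in the heavy-tailed regime $\gamma \in (2,3]$) and applying Chebyshev, noting that by the max bound this truncation leaves $D$ unchanged whp.

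Squaring the max bound and substituting $d = (1+o(1))\mu$, the required inequality $\max_i t(v_i)^2 \leq sd/q$ reduces to showing $s/q \gtrsim h(n)^2\, n^{2/(\gamma-1)} \cdot \zeta(\gamma)/\zeta(\gamma-1)$; this is delivered by the hypothesis $s/q = \omega(1)\cdot \zeta(\gamma)/\zeta(\gamma-1)\cdot n^{1/(\gamma-1)}$ with the $\omega(1)$ factor chosen to dominate $h(n)^2\, n^{1/(\gamma-1)}$ (the statement's $\omega(1)$ is only an asymptotic lower bound on $s/q$, not an upper bound, so there is room to pick $h$ to grow slowly enough relative to the actual $s/q$). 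The main delicate point is the empirical-mean concentration in the regime $\gamma \in (2,3]$, where the untruncated second moment is infinite; truncating at the threshold $M$ from the max bound and applying Chebyshev to the truncated sum is the standard remedy, and everything else is a routine union-bound calculation.
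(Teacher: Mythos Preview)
Your strategy is the same as the paper's: compute $\mu = \E{t(v_i)} = \zeta(\gamma-1)/\zeta(\gamma)$, bound $\max_i t(v_i)$ via a union bound on the power-law tail, square, and compare to $sd/q$. You add an explicit concentration step for the empirical mean $d$ (truncation plus Chebyshev), which the paper omits---it simply computes $\E{d}$ and proceeds as if $d=\mu$.

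The gap is in your final step. From the union bound you correctly get, whp, $\max_i t(v_i) \leq h(n)\, n^{1/(\gamma-1)}$ for any $h\to\infty$, hence $\max_i t(v_i)^2 \leq h(n)^2 n^{2/(\gamma-1)}$; but the hypothesis only guarantees $sd/q = \omega\brac{n^{1/(\gamma-1)}}$. Your device of ``picking $h$ slowly relative to the actual $s/q$'' cannot bridge a polynomial factor of $n^{1/(\gamma-1)}$: take for instance $s/q = (\log n)\,\frac{\zeta(\gamma)}{\zeta(\gamma-1)}\, n^{1/(\gamma-1)}$, which satisfies the hypothesis, so $sd/q \sim (\log n)\, n^{1/(\gamma-1)}$; yet a matching lower-tail calculation shows $\max_i t(v_i) \geq c\, n^{1/(\gamma-1)}$ with probability bounded away from zero, so $\max_i t(v_i)^2 \geq c^2 n^{2/(\gamma-1)} \gg sd/q$ and the condition of Theorem~\ref{E1} fails. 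The paper's own proof hits the same obstacle, obscured by an apparent typo: it sets $t_0 = n^{2/(\gamma-1)}$, correctly obtains $\Pr{\max_i t(v_i) > t_0} = O(n\, t_0^{1-\gamma}) = O(n^{-1})$, but then asserts ``$\max_i t(v_i)^2 \leq n^{1/(\gamma-1)}$,'' whereas squaring $t_0$ gives $n^{4/(\gamma-1)}$. Both arguments go through cleanly if the exponent in the corollary's hypothesis is $\tfrac{2}{\gamma-1}$ rather than $\tfrac{1}{\gamma-1}$.
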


\subsection{Clustering Coefficient.}
We show that by varying $s$ and $q$ we can control the clustering coefficient of a $DROC$ graph. 

\begin{theorem}\label{ex cc} Let $C(v)$ denote the clustering coefficient of a vertex $v$  in \network drawn from $DROC(n,D,s,q)$ with $\max t(v_i)^2 \leq \frac{sd}{q}$, $s= \omega(1)$, $s/n=o(q)$, and $t=t(v)$. Then  $$\E{C(v)} = \lo  \frac{\brac{\sum_{u \in V} t(u)^2}^2}{d^3n^2s} \brac{ (1- e^{-t})^2q^2 +c_t q^3},$$ where $c_t \in [0,6.2)$ is a constant depending on $t$.  \end{theorem}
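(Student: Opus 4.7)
My plan is to compute $\E{C(v)}$ by writing $C(v) = 2T_v/(deg(v)(deg(v)-1))$ (with $C(v) = 0$ when $deg(v) < 2$), separately analyzing the expected triangle count through $v$ and the distribution of $deg(v)$, and combining them via a careful joint expectation. Throughout, set $p_{uw} = qt(u)t(w)/(sd)$ for the per-iteration within-community edge probability and $M = n/((s-1)q)$ for the number of iterations.

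First I would compute the expected number $\E{T_v}$ of triangles through $v$. For each unordered pair $\{a, b\}$, the probability of the triangle $\{v, a, b\}$ decomposes over how the three edges $(va), (vb), (ab)$ are apportioned among iterations: (i) all three from a single iteration with $v, a, b \in S_i$, contributing $M(s/n)^3 p_{va}p_{vb}p_{ab}$; (ii) two edges from one iteration and the third from another (three subcases); (iii) all three from distinct iterations. Under the hypothesis $s/n = o(q)$, case (i) dominates; summing over $\{a,b\}$ with $\sum_{\{a,b\}} t(a)^2 t(b)^2 \sim \frac{1}{2}(\sum_u t(u)^2)^2$ gives $\E{T_v} \sim \frac{q^2 t(v)^2 (\sum_u t(u)^2)^2}{2 s n^2 d^3}$, with (ii) and (iii) contributing corrections of relative order $q$.

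Next I would characterize $deg(v)$ via a Poisson approximation: $N_v \sim \text{Binomial}(M, s/n)$ is approximately $\text{Poisson}(1/q)$, and conditional on $N_v$ the neighbor count $k_i$ of $v$ contributed by each iteration is approximately $\text{Poisson}(qt)$ (using within-community independence and the hypothesis $\max_i t(v_i)^2 \le sd/q$), so $deg(v) = \sum_i k_i$ is asymptotically $\text{Poisson}(t)$. To assemble $\E{C(v)}$, I would observe that the dominant case-(i) contribution to $T_v$ factors as (pairs of $v$'s in-iteration neighbors) times (their size-biased in-iteration connection probability), so $\E{C(v)} \sim \bar{p}_{ab} \cdot \E{\sum_i \binom{k_i}{2}/\binom{deg(v)}{2} \cdot \mathbf{1}[deg(v) \ge 2]}$, where $\bar{p}_{ab} \sim q(\sum_u t(u)^2)^2/(sn^2d^3)$ is the average of $p_{ab}$ over size-biased neighbor pairs. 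Computing the ratio expectation against the joint Poisson law of $(k_i, deg(v))$ yields the factor $(1-e^{-t})^2$, reflecting that the same-iteration neighbor pair requires $v$ to have at least one neighbor on each ``side'' of an effectively two-component decomposition, each event having limiting probability $1-e^{-t}$. Higher-order contributions from cases (ii)--(iii), together with corrections to the Poisson approximation, collapse into the $c_t q^3$ term with $c_t < 6.2$.

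The main obstacle is the ratio structure of $C(v)$: because $T_v$ and $\binom{deg(v)}{2}$ share the same community memberships, one cannot simply write $\E{C(v)} \approx \E{T_v}/\E{\binom{deg(v)}{2}}$, and the naive ratio of expectations recovers only the bare $q^2$ without the $(1-e^{-t})^2$ correction. Extracting the correction requires a direct joint calculation—conditioning on $v$'s iteration memberships and evaluating the ratio expectation via Poisson generating functions for the per-iteration neighbor counts—and the $c_t q^3$ term requires carefully bounding the multi-iteration triangle contributions together with the deviations of $deg(v)$ from its Poisson limit to verify the universal constant $c_t < 6.2$.
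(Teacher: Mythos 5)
Your overall skeleton (condition on $v$'s community memberships, track the per-community neighbor counts $k_i$ with $deg(v)=\sum_i k_i$, then average over the number of communities) parallels the paper's, and you correctly identify the crux: because of the ratio structure one cannot just divide $\E{T_v}$ by $\E{\binom{deg(v)}{2}}$. But the step where the theorem's constant is actually produced is asserted rather than derived, and the computation you propose would not produce it. In the paper, the factor $(1-e^{-t})^2$ comes from writing $\E{C(v)\mid k}=\sum_{u,w}\Pr{N_u}\Pr{N_w}\Pr{u\sim w\mid u\sim v,\,w\sim v}$ with $\Pr{N_u}=\Pr{u\sim v}\E{1/deg(v)\mid u\sim v}$, where \Cref{ex of rec p1} gives $\E{1/deg(v)\mid u\sim v}=\lo\frac{1-e^{-tqk}}{tqk}$; it is the \emph{square} of this per-neighbor correction that yields $(1-e^{-tqk})^2$, and averaging over $k\approx 1/q$ gives $(1-e^{-t})^2$. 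Your proposed assembly, $\E{C(v)}\approx \bar p_{ab}\,\E{\sum_i\binom{k_i}{2}\big/\binom{deg(v)}{2}\,\mathbf{1}[deg(v)\ge 2]}$, evaluated against the joint law you specify, gives something different: conditional on $\kappa$ communities and $deg(v)=m\ge 2$, the allocation $(k_1,\dots,k_\kappa)$ is (in your Poisson limit) uniform multinomial, so $\E{\sum_i\binom{k_i}{2}/\binom{m}{2}}=1/\kappa$ exactly, independent of $t$, and your expression collapses to $\bar p_{ab}\,\E{\Pr{deg(v)\ge 2\mid \kappa}/\kappa}\approx \bar p_{ab}\,q\brac{1-e^{-t}-te^{-t}}$, i.e.\ a leading factor $\Pr{\mathrm{Poisson}(t)\ge 2}$ rather than $(1-e^{-t})^2$. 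The ``at least one neighbor on each side'' heuristic has no counterpart in this calculation, so as written you would end up with a different constant than the statement claims; matching the stated formula requires the paper's formulation via two independent neighbor selections, each carrying its own $\E{1/deg(v)\mid\cdot}$ factor.

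The second gap is the error term. You say the multi-iteration triangle contributions and Poisson-approximation errors ``collapse into the $c_tq^3$ term with $c_t<6.2$,'' but that constant has a very specific origin in the paper which your corrections do not supply: the residual comes from averaging the nonlinear function $f(k)=(1-e^{-kqt})^2/k^3$ over the fluctuations of the community count $M$ (variance $\approx 1/q$), via a second-order Taylor expansion about $k=1/q$ with the explicit bound $f''(z)\le q^5\brac{12+2t^2e^{-2t}}$ and $f''\ge 0$, which pins the remainder in $[0,\,q^4(6+t^2e^{-2t}))\subset[0,6.2\,q^4)$. Your correction sources (cases (ii)--(iii) triangles, deviations of $deg(v)$ from its Poisson limit) are different quantities, are not shown to be of relative order $q$ under only $s/n=o(q)$, and come with no mechanism that bounds the constant by $6.2$ or from below by $0$. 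To complete the argument you would need to (a) redo the assembly step so the per-neighbor $\E{1/deg(v)\mid u\sim v}$ corrections appear (this is where $(1-e^{-t})^2$ comes from), and (b) control the average over the community count with a quantitative Taylor bound as above.
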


\noindent Equation \cref{cc droc} in the proof of the theorem gives a precise statement of the expected clustering coefficient conditioned on community membership.

\section{Discussion and open questions} \label{discussion}

\paragraph{Modeling real-world \networks.}
The ROC model captures the degree distribution and clustering coefficient of graphs simultaneously. Previous work \cite{Hol02}, \cite{Ost13}, and \cite{Rav02} provides models that produce power law \networks with high clustering coefficients. Their results are limited in that the resulting \networks are restricted to a limited range of power-law parameters, and are either deterministic or only analyzable empirically. In contrast, the DROC model is a fully random model designed for a variety of degree distributions (including power law with parameter $\gamma >2$) and can  provably produce \networks with a wide range of clustering coefficient.  

Our model therefore may be a useful tool for approximating large \networks.  It is often not possible to test algorithms on \networks with billions of vertices (such as the brain, social \networks, and the internet). Instead, one could use the DROC model to generate a smaller \network with same clustering coefficient and degree distribution as the large \network, and then optimize the algorithm in this testable setting.  Further study of such a small \network approximation could provide insight into the structure of the large \network of interest. 

 Modeling a \network as the union of relatively dense communities has explanatory value for many real-world settings, in particular for social and biological networks. Social networks can naturally be thought of as the union of communities where each community represents a shared interest or experience  (i.e. school, work, or a particular hobby); the conceptualization of social networks as overlapping communities has been studied  in  \cite{Pal07}, \cite{Xie11}.
Protein-protein interaction networks can also be modeled by overlapping communities, each representing a group of proteins that interact with each other in order to perform a specific cellular process. Analyses of such networks show  proteins are involved in multiple cellular processes, and therefore overlapping communities define the structure of the underlying \network  \cite{Ahn10}, \cite{Kro06}, \cite{Bad03}. 

\paragraph{ROC vs mixed membership stochastic block models.} Mixed membership stochastic block models have traditionally been applied in settings with overlapping communities  \cite{Air08}, \cite{Kar11}, \cite{Air06}. The ROC model differs in two key ways. First, unlike low-rank mixed membership stochastic block models, the ROC model can produce sparse \networks with high triangle and four-cycle ratios. As discussed in the introduction, the over-representation of particular motifs in a \network is thought to be fundamental for its function, and therefore modeling this aspect of local structure is important. Second, in a stochastic block model the size and density of each community and the density between communities are all specified by the model. 
As a result, the size of the stochastic block model must grow with the number of communities, but the ROC model maintains a succinct description. This observation suggests the ROC model may be better suited for \networks in which there are many communities that are similar in structure, whereas the stochastic block model is better suited for \networks with a small number of communities with fundamentally different structures. Below we discuss extensions of the ROC model that maintain a succinct description and produce more diverse community structures. 

\paragraph{Open questions.}
\begin{enumerate}
	\item Consider the following extension. Instead of adding communities of size $s$ and density $q$, we define a probability distribution on a set of pairs $(s_i, q_i)$, and in each iteration choose a pair of parameters $(s_i, q_i)$ from the distribution and build the community $G_{s_i,q_i}$ on $s_i$ randomly selected vertices. Does this modification provide a better approximation for real-world graphs? 
	\item A further generalization involves adding particular sub\networks from a specified set according to some distribution instead of E-R graphs in each step (e.g., perfect matchings or Hamiltonian paths). Does doing so allow for greater flexibility in tuning the number of various types of motifs present (not just triangles and four-cycles)?
		\item A fundamental question in the study of graphs is how to identify relatively dense clusters. For example, clustering protein-protein interaction networks is a useful technique for identifying possible cellular functions of proteins whose functions were otherwise unknown \cite{Ste05, Kro06}. An algorithm designed specifically to identify the communities in a \network drawn from the ROC model has potential to become a state-of-the-art algorithm for clustering real-world networks with overlapping community structure. 
	\item The asymptotic thresholds for properties of E-R \networks have been studied extensively, see \cite{Fri15} for a survey. 
	Such questions are yet to be explored on ROC \networks, e.g., does every nontrivial monotone property have a sharp threshold? 
	\item How do graph algorithms behave on ROC graphs? For instance, what is the covertime of a random walk on a ROC graph?

\end{enumerate}

\bibliographystyle{plain}
\bibliography{networkrefs}

\newpage

\appendix
\section{Limitations of previous approaches}

\begin{theorem} \label{just add tri} Let $G$ be a \network on $n$ vertices obtained by repeatedly adding triangles on sets of three randomly chosen vertices. If the average degree is less than $\sqrt{n}$, the expected ratio of triangles to edges is at most 2/3. \end{theorem}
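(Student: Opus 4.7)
The plan is to bound the ratio of expectations $\E{C(G)}/\E{|E(G)|}$ by an inclusion--exclusion calculation and then transfer to the expected ratio $\E{C(G)/|E(G)|}$ via concentration of $|E(G)|$. Write $p = 6/(n(n-1))$ for the probability that a uniformly random triple contains a specified pair, and $q = 1/\binom{n}{3}$ for the probability it equals a specified triple. If $t$ triples are added, each edge is present iff at least one triple covers it, while a triangle is present iff all three of its edges are present, so
\[
\E{|E(G)|} = \binom{n}{2}\left(1 - (1-p)^t\right),
\]
and inclusion--exclusion on the three edges of a fixed triangle (using that a single triple contains two of them iff it equals the triangle) gives
\[
\E{C(G)} = \binom{n}{3}\left(1 - 3(1-p)^t + 3(1-2p+q)^t - (1-3p+2q)^t\right).
\]

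Setting $a := tp$, so that the expected average degree satisfies $d \approx (n-1)a$, Taylor expansion in the sparse regime $a = O(1/\sqrt n)$ collapses the above to
\[
\frac{\E{C(G)}}{\E{|E(G)|}} = \frac{1}{3} + \frac{d^2}{3n} + o(1),
\]
which is strictly below $2/3$ under $d < \sqrt n$, with slack of order $\frac{1}{3}(1 - d^2/n) > 0$. To upgrade from this ratio of expectations to $\E{C(G)/|E(G)|}$, I would apply concentration only to $|E(G)|$: as a $3$-Lipschitz function of the sequence of $t$ random triples, Azuma--Hoeffding yields $|E(G)| \geq (1-\varepsilon)\E{|E(G)|}$ with probability $1 - e^{-\Omega(\varepsilon^2 t)}$. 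On this good event one has $C(G)/|E(G)| \leq C(G)/[(1-\varepsilon)\E{|E(G)|}]$, so (multiplying by the indicator and taking expectations) the good-event contribution to $\E{C(G)/|E(G)|}$ is at most $\E{C(G)}/[(1-\varepsilon)\E{|E(G)|}] = (1+O(\varepsilon))(\tfrac13 + \tfrac{d^2}{3n} + o(1))$. On the complement, the crude bound $C(G)/|E(G)| \leq \binom{n}{3}$ combined with the super-polynomially small failure probability contributes $o(1)$; the very sparse regime $t = O(\log n)$ is handled separately because there all triples are pairwise disjoint except on an event of probability $O(t^2/n^3)$, and on the disjoint event $C(G)/|E(G)| = 1/3$ deterministically.

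The main obstacle is keeping the final bound at exactly $2/3$ rather than $2/3 + o(1)$. This requires coupling the concentration parameter $\varepsilon$ to the hypothesis-supplied slack $\frac13(1 - d^2/n)$: one chooses $\varepsilon$ shrinking with $n$ fast enough that the multiplicative loss $1/(1-\varepsilon) - 1$ in the denominator is dominated by the slack, yet slowly enough that $e^{-\Omega(\varepsilon^2 t)}$ beats any polynomial in $n$ (so the trivial-bound tail is negligible). This balance is feasible because $t = \Theta(n^{3/2})$ at the boundary of the hypothesis, which leaves ample room; the trickiest subcase is $d$ approaching $\sqrt n$ from below, where the slack $2(\sqrt n - d)/(3\sqrt n) = \Omega(1/\sqrt n)$ still dominates the required $\varepsilon = \Omega(\sqrt{\log n/t}) = o(1/\sqrt n)$.
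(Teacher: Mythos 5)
Your exact inclusion--exclusion formulas for $\E{|E(G)|}$ and $\E{C(G)}$ are correct, and their expansion recovers exactly the count underlying the paper's argument: $t$ planted triangles plus $\approx 36t^3/n^3$ incidental ones against $\approx 3t$ edges, giving $\tfrac13+\tfrac{d^2}{3n}$. In that sense your proof is a more rigorous rendering of the same approach; the paper simply takes the edge count to be $3t$ and the expected triangle count to be at most $t+36t^3/n^3\le 2t$, and does not engage with the distinction between the expected ratio and the ratio of expectations, which your Azuma step is designed to handle.

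The genuine gap is in the boundary analysis, precisely where you locate the difficulty. The slack is not $\Omega(1/\sqrt n)$: since $d=6t/n$ ranges over multiples of $6/n$, the hypothesis $d<\sqrt n$ permits $\sqrt n-d=O(1/n)$, in which case $\tfrac13\bigl(1-\tfrac{d^2}{n}\bigr)=\tfrac{2(\sqrt n-d)}{3\sqrt n}\bigl(1+o(1)\bigr)$ can be as small as $\Theta(n^{-3/2})$. That slack is dominated both by the terms you absorbed into ``$o(1)$'' in the expansion --- already the collision correction $\E{|E(G)|}=3t\bigl(1-\Theta(d/n)\bigr)$ inflates the ratio by $\Theta(n^{-1/2})$ when $d$ is near $\sqrt n$ --- and by the smallest admissible concentration parameter $\varepsilon=\Omega\bigl(\sqrt{\log n/t}\bigr)=\Omega\bigl(\sqrt{\log n}\,n^{-3/4}\bigr)$ needed for the tail to beat the crude bound on $C(G)/|E(G)|$. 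So the proposed balancing cannot close when $\sqrt n-d=o(1)$; to get the constant exactly $2/3$ there you would have to track the signs and magnitudes of the second-order corrections (the positive contribution from edge collisions in the denominator against the negative higher-order terms in the triangle expansion), which the proposal does not do. For $d\le(1-\delta)\sqrt n$ with fixed $\delta>0$, or for the weaker conclusion $2/3+o(1)$, your plan goes through. A minor additional slip: pairwise edge-disjointness of $t$ random triples fails with probability $\Theta(t^2/n^2)$, not $O(t^2/n^3)$, though this is harmless in the $t=O(\log n)$ regime where you invoke it.
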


\begin{proof} 
Let $t$ be the number of triangles added and $d$ the average degree, so $d = 6t/n$. To ensure that $d < \sqrt{n}$, $t< n^{3/2}/6$. The total number of triangles in the \network is $t + (d/n)^3 {n \choose 3}= t +d^3/6= t+36t^3/n^3$. It follows that the expected ratio of triangles to edges is at most $$  \frac{t +36\left(\frac{t}{n}\right)^3}{3t}\leq  \frac{2}{3}.$$ 
\end{proof}

\begin{proof}(of \Cref{block model})	
Let $\sigma_1 \dots \sigma_{rank(M)}$ denote the eigenvalues of $M$.
\begin{align*} 
\E{\# k\text{-cycles} }&=\sum_{i_1 \not=i_2 \dots \not=i_k} M_{i_1i_2}M_{i_2 i_3}\dots M_{i_k i_1}\\
&\leq Tr(M^k)\\
&= \sum_{i=1}^{rank(M)}\sigma_i^k\\
&\leq rank(M) d^k.
\end{align*}
\end{proof}


\section{Connectivity of the ROC model}\label{connectivity section}
We describe the thresholds for connectivity for $ROC(n,d,s,q)$ networks. A vertex is isolated if it is has no adjacent edges. A community is isolated if it does not intersect any other communities. Here we use the abbreviation a.a.s. for asympotically almost surely. An event $A_n$ happens a.a.s. if  $\Pr{A_n} \to 1$ as $n\to \infty$. 

\begin{theorem}\label{no isolated vertices} For $ (s-1)q (\ln n+c)\leq d\leq (s-1)qe^{sq}(1-\ve)$, a \network from $ROC(n,d,s,q)$ a.a.s. has at most $\lo \frac{e^{-c}}{1-\ve}$ isolated vertices. \end{theorem}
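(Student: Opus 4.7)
The plan is to bound the expected number of isolated vertices, exploiting the mutual independence of the $m = dn/(qs(s-1))$ community samples, and then apply Markov's inequality to convert that expectation bound into an a.a.s.\ statement. Let $X$ denote the number of isolated vertices in $G\sim ROC(n,d,s,q)$.

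First, I would fix a vertex $v$ and compute $P(v\text{ isolated})$. Since the communities are drawn independently, the events ``$v$ receives no edge from community $i$'' are mutually independent across $i$, so $P(v\text{ isolated}) = p_0^m$, where $p_0$ is the single-community version. Conditioning on whether $v\in S$ for that community (probability $s/n$) and noting that conditional on $v\in S$ each of the $n-1$ other vertices independently lies in $S$ and is joined to $v$ with probability $sq/n$,
\[
p_0 \;=\; (1-s/n) + (s/n)(1-sq/n)^{n-1} \;=\; 1 - (s/n)\bigl(1-(1-sq/n)^{n-1}\bigr).
\]
Using the asymptotic $(1-sq/n)^{n-1} = e^{-sq}(1+o(1))$ (valid since the two hypotheses jointly force $e^{-sq}\leq (1-\ve)/(\ln n+c)$, so $sq\to\infty$ while remaining $o(n)$), the bound $1-x\leq e^{-x}$, and the identity $m(s/n) = d/(q(s-1))$, I obtain
\[
P(v\text{ isolated}) \;\leq\; \exp\!\Bigl(-(1-o(1))\,\frac{d(1-e^{-sq})}{q(s-1)}\Bigr).
\]

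The heart of the proof is then to feed the two hypotheses into this exponent. The lower bound $d\geq (s-1)q(\ln n+c)$ yields $d/(q(s-1))\geq \ln n+c$, while the upper bound $d\leq (s-1)qe^{sq}(1-\ve)$ yields $de^{-sq}/(q(s-1))\leq 1-\ve$. Splitting
\[
\frac{d(1-e^{-sq})}{q(s-1)} \;=\; \frac{d}{q(s-1)} \;-\; \frac{d\,e^{-sq}}{q(s-1)}
\]
and substituting the two bounds gives $E[X] = n\,P(v\text{ isolated}) \leq \lo\, e^{-c}/(1-\ve)$. Markov's inequality, applied with threshold just above $E[X]$, then delivers the a.a.s.\ statement on $X$.

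The main obstacle is in the last step, namely producing the precise constant $1/(1-\ve)$ (and not just some absolute constant multiple of $e^{-c}$): a naive combination of the two hypotheses leaves a prefactor of $e^{1-\ve}$ in front of $e^{-c}$, whereas obtaining $1/(1-\ve)$ requires careful bookkeeping of the $(1\pm o(1))$ errors in both $p_0^m \approx \exp(-m(1-p_0))$ and $(1-sq/n)^{n-1}\approx e^{-sq}$, and probably a sharper pointwise estimate in one of those two steps (e.g.\ retaining $(1-sq/n)^{n-1}$ unsimplified in the final inequality rather than invoking its limit). A secondary technical point is that when $E[X]$ is of constant order Markov alone does not produce the $\lo$ concentration, so one may need a second-moment or Chen--Stein/Poisson-approximation argument exploiting the near-independence of the isolation indicators to tighten the concentration around $E[X]$.
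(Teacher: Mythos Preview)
Your first-moment strategy is exactly the paper's, and your computation of $p_0$ is correct; in fact the paper's sum over the number $i$ of communities containing $v$ is nothing but the binomial expansion of your $p_0^{m}$. The only real difference is in how that quantity is bounded, and this is precisely where your ``main obstacle'' gets resolved. Instead of using $p_0^{m}\le \exp\bigl(-m(1-p_0)\bigr)$ and then splitting the exponent (which, as you correctly observe, leaves an $e^{1-\ve}$ prefactor), the paper pulls out the factor $(1-s/n)^{m}\approx e^{-d/((s-1)q)}$ and crudely bounds $\binom{m}{i}\le m^{i}$ so that what remains is a genuine geometric series $\sum_{i} r^{i}$ with ratio
\[
r=\frac{d}{(s-1)q}\,e^{-sq}.
\]
The upper hypothesis $d\le (s-1)q\,e^{sq}(1-\ve)$ is invoked exactly to say $r\le 1-\ve$, and the geometric-series sum is what produces the $\frac{1}{1-\ve}$ shape of the constant in the statement. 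So the missing idea is small but specific: do not collapse the sum into an exponential; keep it as a (loosely bounded) geometric series, and read the $\ve$-dependence off the ratio rather than off a subtracted term in an exponent.

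On your secondary point about concentration: the paper does no more than you propose. Its final line is literally $\Pr{X>0}\le \E{X}=(1+o(1))\,e^{-c}/(1-\ve)$, i.e.\ a bare first-moment (Markov) bound, with no second-moment or Poisson-approximation refinement. So you need not look for a sharper concentration argument to match the paper.
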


\begin{proof} We begin by computing the probability a vertex is isolated,
\begin{align*} 
\Pr{v \text{ is isolated}  }&=   \sum_{i=0}^{\frac{nd}{s^2q}} \Pr{v \text{ is in $i$ communities} } (1-q)^{si}\\
&= \lo \sum_{i=1}^{\frac{nd}{s^2q}} { \frac{nd}{s(s-1)q} \choose i} \bfrac{s}{n}^i \brac{1-\frac{s}{n}}^{\frac{nd}{s(s-1)q} -i} e^{-sqi}\\
&\leq \lo e^{-\frac{d}{(s-1)q}}\sum_{i=0}^{\frac{nd}{s^2q}}  \bfrac{de^{-sq+\frac{s}{n}}}{(s-1)q}^ i   \\
&= \lo e^{-\frac{d}{(s-1)q}} \sum_{i=1}^{\frac{nd}{s^2q}}  \bfrac{de^{-sq}}{(s-1)q}^ i \\
&=\lo \brac{ e^{-\frac{d}{(s-1)q}}}\bfrac{1}{1-\ve}.
\end{align*}

Let $X$ be a random variable that represents the number of isolated vertices of a \network drawn from $ROC(n,d,s,q)$. We compute $$\Pr{X>0} \leq \E{X} = \lo n \brac{ e^{-\frac{d}{(s-1)q}}}\bfrac{1}{1-\ve}
 =\lo \bfrac{ e^{-c}}{1-\ve}.$$
\end{proof}

\begin{theorem} \label{no isolated communities} 
	A \network from  $ROC(n,d,s,q)$ with $s=o(\sqrt{n})$ has no isolated communities a.a.s. if 
	$$\frac{d}{q}> \log{\frac{nd}{s^2q}}.$$
\end{theorem}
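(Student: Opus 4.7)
I would proceed by the first moment method. Let $X$ count the isolated communities among the $m = \frac{nd}{qs(s-1)}$ communities generated by the algorithm. By symmetry across communities $\E{X} = m \cdot \Pr{C_1 \text{ isolated}}$, so by Markov's inequality it suffices to prove that this is $o(1)$.

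To estimate $\Pr{C_1 \text{ isolated}}$, I would condition on the random vertex set $V(C_1)$, which is obtained by including each of the $n$ vertices independently with probability $s/n$. Given $|V(C_1)| = k$, each of the other $m-1$ communities independently fails to meet $V(C_1)$ with probability $(1-s/n)^k$, and these events are conditionally independent across communities, yielding
$$\Pr{C_1 \text{ isolated} \mid |V(C_1)| = k} = (1-s/n)^{k(m-1)} \leq \exp\!\left(-\tfrac{sk(m-1)}{n}\right).$$
A direct computation gives $\tfrac{s(m-1)}{n} = \tfrac{d}{q(s-1)}(1 - o(1))$, using $s = o(\sqrt n)$.

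Next, since $|V(C_1)| \sim \mathrm{Bin}(n, s/n)$ has mean $s$, a Chernoff bound gives $|V(C_1)| \geq (1-\eta)s$ with probability at least $1 - e^{-\Omega(\eta^2 s)}$ for any $\eta \in (0,1)$. Assuming $s = \omega(1)$ and letting $\eta \to 0$ slowly, substitution yields
$$\Pr{C_1 \text{ isolated}} \leq \exp\!\left(-(1-o(1))\tfrac{d}{q}\right) + o(1/m).$$
Since $\log m = \log\tfrac{nd}{qs(s-1)} = \log\tfrac{nd}{qs^2} + o(1)$, the hypothesis $d/q > \log(nd/qs^2)$ gives
$$\log \E{X} \leq \log\tfrac{nd}{qs^2} - \tfrac{d}{q} + o(1) \to -\infty,$$
as desired.

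The main obstacle I anticipate is handling regimes where $s$ does not grow with $n$: then the Chernoff concentration of $|V(C_1)|$ is vacuous, and empty communities are trivially ``isolated'' (with $\Pr{|V(C_1)| = 0} = (1-s/n)^n \to e^{-s}$ a positive constant), so one must either exclude them from the definition or bound their contribution by hand. Additionally, the gap $\log\!\tfrac{s}{s-1}$ between $\log m$ and the stated threshold $\log(nd/qs^2)$ vanishes precisely when $s \to \infty$, so the clean statement is genuinely asymptotic in the large-$s$ regime; apart from this, the remaining estimates reduce to routine Taylor expansions of $(1-s/n)^{m-1}$ and of $(1 - p)^n$ for small $p$.
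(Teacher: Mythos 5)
Your overall strategy (first moment over the $m=\frac{nd}{qs(s-1)}$ communities, conditioning on $V(C_1)$ and using the conditional independence of the other communities) is sound, and it is in fact more careful than the paper on one point: the paper instead forms an auxiliary ``community graph'' on the $m$ communities, treats every community as having exactly $s$ vertices, models the intersections as an independent $G\big(m,\,1-(1-s/n)^s\big)$, and invokes the classical isolated-vertex threshold for Erd\H{o}s--R\'enyi graphs. The genuine gap in your argument is the error term: after the Chernoff step you assert $\Pr{C_1 \text{ isolated}} \le e^{-(1-o(1))d/q} + o(1/m)$, but Chernoff only gives a failure probability $e^{-\Omega(\eta^2 s)}$, and since $m \approx nd/(qs^2)$ is typically polynomial in $n$, making this $o(1/m)$ requires $\eta^2 s = \omega(\log m)$, i.e.\ $s$ growing at least like $\log n$. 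That is far stronger than the $s=\omega(1)$ you acknowledge assuming, and it is not implied by the theorem's hypotheses ($s=o(\sqrt n)$ plus the threshold inequality). So as written the second term does not vanish after multiplying by $m$.

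The problem is not merely technical, because the atypically small communities are exactly the dominant contributors to isolation. Computing the expectation exactly via the generating function of $|V(C_1)|\sim \mathrm{Bin}(n,s/n)$ gives
\[
\Pr{C_1 \text{ isolated}} \;=\; \left(1-\frac{s}{n}\Bigl(1-(1-s/n)^{m-1}\Bigr)\right)^{\!n}
\;=\; \exp\!\left(-(1+o(1))\, s\Bigl(1-e^{-d/((s-1)q)}\Bigr)\right),
\]
so your claimed bound $e^{-(1-o(1))d/q}$ is correct only in the regime $d/((s-1)q)=o(1)$ (few communities per vertex); when $d\gtrsim sq$ the exponent saturates near $s$, and $m\,e^{-\Theta(s)}$ need not vanish under the stated hypothesis $d/q>\log\frac{nd}{s^2q}$ unless $s\gtrsim \log m$. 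The first-moment condition your route actually yields is $s\bigl(1-e^{-d/((s-1)q)}\bigr)>\log m$, which reduces to the theorem's threshold only in that small-membership regime. The paper never confronts this because it fixes all community sizes at $s$ deterministically (its own looseness); to complete your proof you should either adopt that convention explicitly, or replace the Chernoff-plus-union step by the exact computation above and state the regime (or extra hypothesis on $s$) under which the conclusion follows.
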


\begin{proof} 
	We construct a ``community \network" and apply the classic result that $G(n,p)$ will a.a.s. have no isolated vertices when $p> (1+\eps)\log{n}/n$ for any $\eps>0$\cite{Erd59}. In the ``community \network" each vertex is a community and there is an edge between two communities if they share at least one vertex; a ROC \network has no isolated communities if and only if the corresponding ``community \network" is connected. The probability two communities don't share a vertex is $(1-\frac{s}{n})^s$. Since communities are selected independently, the ``community \network" is an instance of $G\left(\frac{nd}{s(s-1)q}, 1-(1-\frac{s}{n})^s\right)$. By the classic result, approximating the parameters by $\frac{nd}{s^2q}, 1-e^{s^2/n}$, this \network is connected when $$1-e^{-s^2/n}> \frac{\log{\frac{nd}{s^2q}}}{\frac{nd}{s^2q}}.$$ Since $s= o(\sqrt{n})$ is small, the left side of the inequality is approximately $s^2/n$, yielding the equivalent statement $$\frac{d}{q}>\log\frac{nd}{s^2q}.$$
\end{proof}

Note that the threshold for isolated vertices is higher, meaning that if a ROC \network a.a.s has no isolated vertices, then it a.a.s has no isolated communities. These two properties together imply the \network is connected.

\section{Section \ref{basic model} proofs}\label{proofs}

\begin{proof} 
	(of \Cref{triangle}.) Let $G \sim ROC(n,d,s,q)$ and $u,v \in V(G)$. First note that without information about whether $u$ and $v$ are in community together $\Pr{ u \sim v }=d/n=o(1)$ because each edge is equally likely. However, $\Pr{ u \sim v \given \text{ $u, v$ are in a common community } }=q+o(1)$. We show that both the triangle count and the four-cycle count are dominated by cycles contained entirely in one community. 
	
	We compute $\E{C_3(G)}$ by counting the total number of triangles in $G$. Let $T_1$ be the number triangles with all three edges originating in one community, $T_2$ be the number of triangles with two edges originating in the same community and the third edge originating in a different community, and $T_3$ be the number of triangles with edges originating in three different communities. We compute 
	
	\begin{align*} 
	\E{T_1}&= \left(\text{\# com.} \right)\E{\text{triangles in a com.}}=\frac{nd}{s(s-1)q}\frac{s^3q^3}{6}=\frac{ndsq^2}{6} \lo \\
	\E{T_2}&=\left(\text{\# com.} \right)\E{\# \text{two paths $u \sim v$, $v \sim w$ in a com.}}\Pr{\text{ $u \sim w$ in other com.}}\\
	&=\frac{nd}{s(s-1)q}s { s-1 \choose 2} q^2 \frac{d}{n}=\frac{d^2sq}{2} \lo \\
	\E{T_3}&=  \left(\text{\# triples $u,v,w \in V(G)$} \right)\Pr{\text{$u,v,w$ form a triangle}}={ n \choose 3} \left(\frac{d}{n}\right)^3= \frac{d^3}{6}.
	\end{align*}
	Therefore $$\E{R_3(G)}=\frac{2\E{T_1+T_2+T_3}}{nd}=\frac{sq^2}{3}(1+o(1)).$$ 

	Similarly, we compute $\E{C_4(G)}$ by summing over different categories of four-cycles based on the shared community membership of the vertices. For simplicity suppose the $a,b,c,d$ are the vertices of the four-cycle and let $C_1, \dots C_4$ denote different communities. If $\{a,b,c,d\} \in C_1$, the cycle is type 1. If $\{a,b,c\} \in C_1$ and $\{a,c,d\} \in C_2$, the the cycle is type 2. If $\{a,b, d\} \in C_1, \{b,c\} \in C_2, \{c,d\} \in C_3$, then the cycle is type 3. If $\{a,b \} \in C_1, \{b,c\} \in C_2, \{c,d\} \in C_3, \{d,a\} \in C_4$, then the cycle is type 4. Let $F_i$ be the number of cycles of type $i$.  We compute \begin{align*} 
	\E{F_1}&=\left(\text{\# com.} \right)\E{\text{\# four-cycles in a com.}}=\frac{nd}{s(s-1)q}\frac{3s^4q^4}{24}=\frac{nds^2q^3}{8}\lo\\
	\E{F_2}&=\E{\text{\# vertex pairs $u,v$ in two of the same coms. }}\left(\E{\text{\# common nhbs of $u,v$ in a com.}}\right)^2\\
	&={n \choose 2}  \bfrac{s}{n}^4  { \frac{nd}{s(s-1)q} \choose 2}\left( (s-2)q^2\right)^2= \frac{s^2q^2 d^2}{4} \lo\\
	\E{F_3}&= \left(\text{\# com.} \right)\E{\# \text{two paths $u \sim v$, $v \sim w$ in a com.}} |V(G)| \Pr{\text{ $x \sim w$ and $x \sim u$}}\\
	&=\frac{nd}{s(s-1)q} s { s-1 \choose 2} q^2\left( \frac{d}{n} \right)^2=\frac{sd^3}{2}\\
	\E{F_4}&=  \left(\text{\# quadruples $u,v,w,x \in V(G)$} \right)\E{\text{ways $u,v,w,x$ form a four-cycle}}\\
	&={ n \choose 4} 3 \left(\frac{d}{n}\right)^4= \frac{d^4}{8}.
	\end{align*}
	Therefore $$\E{R_3(G)}=\frac{2\E{F_1+F_2+F_3+F_4}}{nd}=\frac{s^2q^3}{3}(1+o(1)).$$ \end{proof}


\begin{proof} (of \Cref{real network bounds}.) 
(1) For each edge in $H$, let $t_e$ be the number of triangles containing $e$, so $\sum_{e \in E(H)} t_e = 3 C_3(H)= 3 r_3 |E(H)|$. If triangles $abc$ and $abd$ are present, then so is the four-cycle $acbd$. This four-cycle may also be counted via triangles $cad$ and $cdb$. Therefore $C_4(H) \geq \frac{1}{2} \sum_{ e\in E(H)} {t_e \choose 2}$. This expression is minimized when all $t_e$ are equal. We therefore obtain $$r_4|E(H)|=C_4(H)\geq \frac{|E(H)|}{2} {  3 r_3 \choose 2}  = \frac{3r_3(3r_3-1) |E(H)|}{4}.$$ It follows that $\frac{3r_3(3r_3-1)}{4r_4}\leq1$.  
 
 (2) Since the hypothesis guarantees $q \leq 1$, applying \Cref{triangle} to $G \sim ROC\left(n,d, \frac{16r_4^2}{27r_3^3}, \frac{9r_3^2}{4r_4}\right)$ implies the desired statements.  
\end{proof}


\begin{remark} \label{ignore}
\Cref{bounds} gives bounds expected clustering coefficient up to factors of $(1 + o(1))$.  The clustering coefficient at a vertex is only well-defined if the vertex has degree at least two. Given the assumption in  \Cref{bounds} that $d= \omega(sq \log{\frac{nd}{s}})$, $ d< (s-1)q e^{sq}$, and $s=\omega(1)$, \Cref{no d1} implies that the fraction of vertices of degree strictly less than two is $o(1)$. Therefore we ignore the contribution of these terms throughout the computations for \Cref{bounds} and supporting \Cref{exact}. In addition we divide by $deg(v)^2$ rather than by $deg(v)(deg(v)-1)$ in the computation of the clustering coefficient since this modification only affects the computations up to a factor of $(1 + o(1))$.
\end{remark}

\begin{lemma}\label{no d1} If $d= \omega( sq\log{\frac{nd}{s}})$, $s=\omega(1)$, $s=o(n)$, and $d<(s-1)q e^{sq}$, then a \network from $ROC(n,d,s,q)$ a.a.s. has no vertices of degree less than 2. \end{lemma}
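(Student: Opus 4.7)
The plan is to extend the first-moment argument used in the proof of \Cref{no isolated vertices} to vertices of degree at most $1$. Let $Z$ denote the number of vertices of degree strictly less than $2$ in $G \sim ROC(n,d,s,q)$; by Markov's inequality it suffices to show
$\E{Z} = n \cdot \Pr{\deg(v) \leq 1} = o(1)$ for a fixed vertex $v$.

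Fix $v$ and let $K$ denote the number of communities containing $v$, so $K \sim \mathrm{Bin}(m, s/n)$ with $m = nd/(s(s-1)q)$ and $\E{K} = \lambda = d/((s-1)q)$. Conditional on $K = k$, for each other vertex $u$, the events ``$u$ joins round $j$'' and ``the edge $(u,v)$ is added in round $j$'' are independent across $u$ and across the $k$ rounds containing $v$, with probabilities $s/n$ and $q$ respectively. Therefore $\deg(v) \mid K = k \sim \mathrm{Bin}(n-1, p_k)$ with $p_k = 1 - (1 - (s/n)q)^k$, so
\[
\Pr{\deg(v) \leq 1 \mid K = k} = (1-p_k)^{n-1} + (n-1)p_k(1-p_k)^{n-2} \leq \lo (1 + ksq)\, e^{-ksq}.
\]

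Taking expectations over $K$ using the binomial moment generating function (as in the computation in the proof of \Cref{no isolated vertices}) gives
\[
\Pr{\deg(v) \leq 1} \leq \lo \brac{1 + sq\,\lambda\, e^{-sq}} e^{-\lambda(1-e^{-sq})}.
\]
Using $sq\,\lambda = \lo d$, and the hypothesis $d < (s-1)q\, e^{sq}$ which forces $d\, e^{-sq} < sq$, the prefactor is at most $1 + sq$. The hypotheses also force $sq = \omega(1)$: if $sq = O(1)$ then $d < (s-1)q\, e^{sq} = O(1)$, which contradicts $d = \omega(sq \log(nd/s))$ since $s = o(n)$ implies $\log(nd/s) \to \infty$. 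Consequently $1 - e^{-sq} \geq 1/2$ for large $n$, and it remains to show
$n(1+sq)\, e^{-\lambda/2} = o(1)$, i.e.\ $\lambda/2 = \omega(\log n + \log(1+sq))$.

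The hypothesis $d = \omega(sq \log(nd/s))$ translates directly to $\lambda = \omega(\log(nd/s))$. I expect the main obstacle to be verifying this dominates $\log n + \log(1+sq)$ in the regime $s = n^{1-o(1)}$, where $\log(nd/s)$ may be much smaller than $\log n$. The plan there is to chain $d \geq sq \log(nd/s) \geq sq$ together with $d < sq\, e^{sq}$ to conclude that $sq$, and hence $\log(1+sq)$, is controlled by $sq \leq \log d \leq \log(nd/s) + \log s$, so that $\log(n(1+sq)) = O(\log(nd/s))$, which is absorbed by the $\omega$-growth of $\lambda/\log(nd/s)$.
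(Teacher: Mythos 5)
You follow essentially the same route as the paper --- a first-moment bound obtained by conditioning on the number $K$ of communities containing $v$ --- and your single binomial/MGF computation of $\Pr{\deg(v)\le 1 \mid K=k}$ is if anything cleaner than the paper's, which disposes of degree-$0$ vertices by citing \Cref{no isolated vertices} and of degree-$1$ vertices by a separate geometric-series estimate based on $d<(s-1)qe^{sq}$. However, the step you leave open is a genuine gap, and the plan you sketch for closing it does not work. You need $\lambda(1-e^{-sq})-\log n-\log(1+sq)\to\infty$ with $\lambda=d/((s-1)q)$, whereas the hypothesis only yields $\lambda=\omega\brac{\log\frac{nd}{s}}$, and the claimed bound $\log\brac{n(1+sq)}=O\brac{\log\frac{nd}{s}}$ is false precisely in the regime you flag: for $s=n/\log n$ one has $\log\frac{nd}{s}=\log(d\log n)\ll\log n$. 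Nor is this slack in your estimates: the $k=0$ term alone contributes $(1-s/n)^{nd/(s(s-1)q)}\approx e^{-\lambda}$ to $\Pr{\deg(v)\le 1}$ (a vertex lying in no community is isolated), and community memberships are independent across vertices, so the conclusion genuinely requires $\lambda\ge\log n+\omega(1)$, i.e.\ $d\gtrsim sq\log n$. That is not implied by the stated hypotheses: with $s=n/\log n$, $sq=\log\log n$, $d=(\log\log n)^3$ all hypotheses hold, yet $\lambda\approx(\log\log n)^2$ and the graph a.a.s.\ has $n^{1-o(1)}$ isolated vertices.

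The paper's own proof supplies the missing strength only by invoking \Cref{no isolated vertices}, whose hypothesis $d\ge(s-1)q(\ln n+c)$ is exactly the condition $\lambda\gtrsim\log n$ you are lacking (and which is not itself derived from the lemma's hypotheses), so the obstacle you hit is intrinsic to the stated assumptions rather than to your route. To finish along your lines, either restrict to the regime $\log\frac{nd}{s}=\Theta(\log n)$ (e.g.\ $s\le n^{1-\eps}$ and $1\le d\le n$, which is what the authors implicitly assume); there your reduction closes immediately, since then $\lambda=\omega(\log n)$ while $sq\le d$ gives $\log(1+sq)=O(\log n)$. Or else import the no-isolated-vertices condition $d\ge(s-1)q(\ln n+c_n)$ with $c_n\to\infty$, handle degree $0$ by \Cref{no isolated vertices} as the paper does, and use your MGF bound only for the degree-$1$ term. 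Two smaller points: your claim that $s=o(n)$ forces $\log\frac{nd}{s}\to\infty$ (used to get $sq=\omega(1)$) needs $d$ bounded away from $0$; and, to your credit, your conditional bound correctly keeps the factor $ksq$ in the degree-one term, which the paper's displayed expression for the probability of degree one given $i$ communities drops.
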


\begin{proof}  
\Cref{no isolated vertices} implies there are no isolated vertices a.a.s.  
We begin by computing the probability a vertex has degree one. 
\begin{align*} \Pr{deg(v)=1 }&=  \sum_{i=1}^{\frac{nd}{s^2q}} \Pr{v \text{ is in $i$ communities} } q (1-q)^{si-1}\\
&=  \sum_{i=1}^{\frac{nd}{s^2q}} { \frac{nd}{s(s-1)q} \choose i} \bfrac{s}{n}^i \brac{1-\frac{s}{n}}^{\frac{nd}{s(s-1)q} -i} q (1-q)^{si-1}\\
&\leq \lo \sum_{i=1}^{\frac{nd}{s^2q}}  \bfrac{nd}{s(s-1)q}^ i \bfrac{s}{n}^i e^{-\frac{d}{sq}+\frac{si}{n}} q e^{-qsi+q}  \\
&= \lo q e^{-\frac{d}{sq}} \sum_{i=1}^{\frac{nd}{s^2q}}  \bfrac{de^{-sq}}{(s-1)q}^ i \\
&=O\brac{ \frac{de^{-sq-\frac{d}{sq}}}{s}}
\end{align*}
Let $X$ be a random variable that represents the number of degree one vertices of a \network drawn from $ROC(n,d,s,q)$. When $d= \omega( sq\log{\frac{nd}{s}})$, we obtain $$\Pr{X>0} \leq \E{X} = O\brac{ \frac{nde^{-sq-\frac{d}{sq}}}{s}}=o(1).$$
\end{proof}

\begin{lemma} \label{exact}
Let $C(v)$ denote the clustering coefficient of a vertex $v$ of degree at least 2 in a \network drawn from $ROC(n,d,s,q)$ with $d=o( \sqrt{n})$ and $d= \omega(sq \log{\frac{nd}{s}})$. Then $$\E{C(v)}=\left(1 + o(1) \right) \left( \sum_{i=1}^{\frac{nd}{s^2q}} { \frac{nd}{s^2q} \choose i} \left( \frac{s}{n}\right)^i \left(1- \frac{s}{n}\right)^{\frac{nd}{s^2q}-i}\frac{s(s-1)q^3k }{\left( sqk + 2-2q\right)^2} \right). $$ \label{cc}  
\end{lemma}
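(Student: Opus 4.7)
The approach is to condition on $i$, the number of communities containing $v$, and expand $\E{C(v)}$ as a sum over $i$. Since each of the $nd/(s(s-1)q) \approx nd/(s^2q)$ communities is formed independently and includes $v$ with probability $s/n$, the count $i$ is Binomial with these parameters, producing exactly the weight $\binom{nd/(s^2q)}{i}(s/n)^i(1-s/n)^{nd/(s^2q)-i}$ that appears in the stated sum. It therefore suffices to show, for each $i \geq 1$, that
\[
\E{C(v) \mid v \text{ is in } i \text{ communities}} = (1+o(1))\,\frac{s(s-1)q^3\, i}{(sqi + 2 - 2q)^2}.
\]

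Conditional on $v$ being in $i$ specific communities $C_1,\ldots,C_i$, the construction has a clean product structure: each other vertex $w$ is included in each $C_j$ independently with probability $s/n$ and, given $w \in C_j$, the edge $vw$ is added with probability $q$. Consequently $\Pr{w \in N(v) \mid i} = 1-(1-sq/n)^i$, and for distinct $w_1 \ne w_2$ the events $\{w_1 \in N(v)\}$ and $\{w_2 \in N(v)\}$ are conditionally independent, because the per-community probability that neither is a neighbor of $v$ via $C_j$ factors as $(1-sq/n)^2$. Thus $\deg(v)$, conditional on $i$, is distributed as $\mathrm{Binomial}(n-1,\, 1-(1-sq/n)^i)$, which determines all the moments of $\deg(v)$ needed below.

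Using \Cref{ignore} to replace $\deg(v)(\deg(v)-1)$ by $\deg(v)^2$ and to restrict attention to $\deg(v) \ge 2$, I would write $C(v) \approx 2T_v/\deg(v)^2$. The triangle count $T_v$ is dominated by triangles lying entirely within one $C_j$: cross-community triangles contribute strictly lower-order terms when $d = o(\sqrt{n})$, by the same type-by-type decomposition used in the proof of \Cref{triangle}. Within a single community the expected number of triangles through $v$ is $\E{\binom{|C_j|-1}{2}} q^3 = (1+o(1))\,s(s-1)q^3/2$ (with $|C_j|-1 \sim \mathrm{Binomial}(n-1, s/n)$), so summing over the $i$ communities gives $\E{T_v \mid i} = (1+o(1))\, i\, s(s-1)q^3/2$, matching the numerator of the target expression.

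The hard part is pinning down the denominator $(sqi + 2 - 2q)^2$ precisely and justifying the replacement of $\E{T_v/\deg(v)^2 \mid i}$ by a ratio of expectations. The hypothesis $d = \omega(sq \log(nd/s))$ forces the conditional expected degree $\approx sqi$ to grow quickly enough that Chernoff bounds concentrate $\deg(v)$ around its mean given $i$, so that moving the denominator outside the expectation only introduces a $(1+o(1))$ factor. The $+\,2-2q$ correction is a second-order refinement of the conditional moment of $\deg(v)$ in the regime where the $\deg(v)\ge 2$ restriction is nontrivial (small $sqi$) together with the exact per-community contribution to $\E{\deg(v)\mid i}$; combining this with the dominant term $sqi$ yields the stated denominator. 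Summing the resulting per-$i$ estimate against the Binomial weight for $i$ then delivers the lemma.
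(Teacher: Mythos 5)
There is a genuine gap. Your outline reproduces the easy structural part of the argument --- conditioning on the number of communities $i$, the binomial weights $\binom{nd/(s^2q)}{i}(s/n)^i(1-s/n)^{nd/(s^2q)-i}$, and the numerator $\E{T_v \mid i}\approx i\,s(s-1)q^3/2$ --- but the actual content of the lemma is the closed form of the conditional expectation, in particular the denominator $(sqi+2-2q)^2$, and that is exactly the step you label ``the hard part'' and then do not carry out. The concentration route you propose cannot deliver it. First, the hypothesis $d=\omega(sq\log\frac{nd}{s})$ controls only the \emph{typical} number of communities $i\approx d/(sq)$; it says nothing about a fixed small $i$ in the sum, and since the lemma allows $sq=O(1)$, the conditional mean degree $\approx sqi$ need not grow at all for small $i$, so Chernoff concentration of $\deg(v)$ given $i$ simply fails there, and $\E{T_v/\deg(v)^2\mid i}$ differs from $\E{T_v\mid i}/\E{\deg(v)\mid i}^2$ by constant factors, not $(1+o(1))$. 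Second, even where concentration does hold, replacing the denominator by its mean squared gives $(sqi)^2(1+o(1))$ and erases precisely the $+2-2q$ correction; calling it ``a second-order refinement'' is a restatement of what must be proved, not a proof.

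The paper's proof gets the denominator by an explicit computation rather than concentration: conditional on $i=k$ communities, it writes $C(v)\approx q\sum_j X_j(X_j-1)/(\sum_j X_j)^2$ with $X_j\sim Bin(s,q)$ the within-community degrees, decomposes $X_1=\sum_\ell y_\ell$ into Bernoulli indicators, and evaluates terms such as $\E{y_1y_2/\bigl(sq(k-1)+(s-2)q+y_1+y_2\bigr)^2}$ exactly: the two indicators corresponding to the pair of neighbors forming the potential triangle are fixed to $1$ (contributing $+2$ to the degree) while the remaining indicators and the other communities' degrees are replaced by their means, and $sq(k-1)+(s-2)q+2=sqk+2-2q$. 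This is where the correction comes from --- it encodes the correlation between the event that the two neighbors exist and $\deg(v)$ --- and it is invisible to any argument that treats $T_v$ and $\deg(v)^2$ as decoupled via a ratio of expectations. To repair your write-up you would need to perform this conditional moment computation (or an equivalent exact evaluation of $\E{\sum_j X_j(X_j-1)/(\sum_j X_j)^2}$) for every $i\ge 1$, not just invoke concentration in the regime of typical $i$.
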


\begin{proof} 
For ease of notation, we ignore factors of $(1+ o(1))$ throughout as described in \Cref{ignore}.
First we compute the expected clustering coefficient of a vertex from an $ROC(n,d,s,q)$ \network given $v$ is contained in precisely $k$ communities. Let $X_1, \dots X_k$ be random variables representing the degree of $v$ in each of the communities, $X_i \sim Bin(s,q)$. We have \begin{align} \E{C(v)| \text{ $v$ in $k$ communities }}&=\E{ \frac{\sum_{i=1}^kX_i(X_i-1) q}{\left( \sum_{i=1}^k X_i\right)^2}} \label{given k} \\
&= qk \E{ \frac{X_1(X_1-1) }{\left( sq(k-1) + X_1\right)^2}} \nonumber \\
&= qk \E{ \frac{X_1^2 }{\left( sq(k-1) + X_1\right)^2}}- qk \E{ \frac{X_1 }{\left( sq(k-1) + X_1\right)^2}} \nonumber.
\end{align}
Write $X_1= \sum_{i=1}^s y_i$ where $y_i \sim Bernoulli(q)$. Using linearity of expectation and the independence of the $y_i's$ we have
\begin{align*} \E{ \frac{X_1 }{\left( sq(k-1) + X_1\right)^2}}&= s \E{ \frac{y_1 }{\left( sq(k-1) + (s-1)q+y_1\right)^2}}
&= \frac{sq }{\left( sq(k-1) + (s-1)q+1\right)^2},\\
\end{align*}
and 
\begin{align*} \E{ \frac{X_1^2 }{\left( sq(k-1) + X_1\right)^2}}&= \E{ \frac{\left(\sum_{i=1}^s y_i\right)^2 }{\left( sq(k-1) +\sum_{i=1}^s y_i\right)^2}}\\
&= s\E{ \frac{ y_1^2 }{\left( sq(k-1) +q(s-1) +y_1\right)^2}}+s(s-1)\E{ \frac{\left(y_1y_2\right)^2 }{\left( sq(k-1) + (s-2)q+y_1+y_2\right)^2}}\\
&= \frac{ sq }{\left( sq(k-1) +q(s-1) +1\right)^2}+\frac{s(s-1)q^2 }{\left( sq(k-1) + (s-2)q+2\right)^2}.
\end{align*}
Substituting in these values into \Cref{given k}, we obtain
\begin{equation}
\E{C(v)|v \in k \text{ communities }}=qk\left(\frac{s(s-1)q^2 }{\left( sq(k-1) + (s-2)q+2\right)^2}\right)=\frac{s(s-1)q^3k }{\left( sqk + 2-2q\right)^2}.\label{conditional}
\end{equation}

Let $M$ be the number of communities a vertex is in, so $M  \sim Bin\left(\frac{nd}{s^2q},\frac{s}{n}\right).$ It follows 
\begin{align*}
\E{C(v)}&=\sum_{i=1}^{\frac{nd}{s^2q}}\Pr{\text{ $v$ in $k$ communities }} \E{C(v)|\text{ $v$ in $k$ communities }}\\
&=\sum_{i=1}^{\frac{nd}{s^2q}} { \frac{nd}{s^2q} \choose i} \left( \frac{s}{n}\right)^i \left(1- \frac{s}{n}\right)^{\frac{nd}{s^2q}-i}\frac{s(s-1)q^3k }{\left( sqk + 2-2q\right)^2}.\\
\end{align*}
\end{proof}

The proof of \Cref{bounds}, relies on the follow two lemmas
regarding expectation of binomial random variables. 

\begin{lemma} \label{ex of rec p1} 
Let $X \sim Bin(n,p)$. Then 
\begin{enumerate}
\item $\E{ \frac{1}{X+1} \given X \geq 1 }= \frac{1- \left( 1-p\right)^{n+1}-(n+1)p(1-p)^n}{p(n+1)}$ and
\item $\E{ \frac{1}{X+1}}= \frac{1- \left( 1-p\right)^{n+1}}{p(n+1)}$.
\end{enumerate}
\end{lemma}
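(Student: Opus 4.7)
The plan is to prove part~2 first by a direct calculation and then deduce part~1 by removing the $k=0$ contribution from the sum. The key algebraic identity is
\[
\frac{1}{k+1}\binom{n}{k} \;=\; \frac{1}{n+1}\binom{n+1}{k+1},
\]
which lets us rewrite the weighted moments of $1/(X+1)$ as a binomial sum whose upper/lower index has been shifted by one. Once the identity is in place, the binomial theorem does the rest.

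Concretely, for part~2 I would write
\[
\E{\frac{1}{X+1}} \;=\; \sum_{k=0}^{n} \frac{1}{k+1}\binom{n}{k} p^{k}(1-p)^{n-k}
\;=\; \frac{1}{p(n+1)} \sum_{k=0}^{n} \binom{n+1}{k+1} p^{k+1}(1-p)^{(n+1)-(k+1)},
\]
reindex by $j=k+1$ so the sum becomes $\sum_{j=1}^{n+1}\binom{n+1}{j}p^{j}(1-p)^{n+1-j}$, and recognize this as $1-(1-p)^{n+1}$ by the binomial theorem (it is the full expansion minus the $j=0$ term). This yields the claimed formula $\frac{1-(1-p)^{n+1}}{p(n+1)}$.

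For part~1, the statement's right-hand side does not contain the usual $1/\Pr[X\ge 1]$ normalization, so I read the notation $\E{\,\cdot\, \mid X\ge 1}$ as $\E{\frac{1}{X+1}\mathbf{1}_{X\ge 1}}$ (i.e.\ the restricted sum). With that reading, the result is immediate: subtract the $k=0$ contribution from part~2. The $k=0$ term of the original sum equals $\frac{1}{0+1}\binom{n}{0}(1-p)^n=(1-p)^n$, so
\[
\E{\frac{1}{X+1}\mathbf{1}_{X\ge 1}} \;=\; \frac{1-(1-p)^{n+1}}{p(n+1)} - (1-p)^n \;=\; \frac{1-(1-p)^{n+1} - (n+1)p(1-p)^n}{p(n+1)},
\]
matching the claim after combining over a common denominator.

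There is no real obstacle here; it is a standard binomial calculation. The only delicate point is being careful with the index shift in part~2 and correctly interpreting the conditioning notation in part~1 (I would add a one-line remark clarifying that $\E{\,\cdot\mid X\ge 1}$ here denotes the corresponding unnormalized restricted expectation, consistent with how the expression is later used by the authors).
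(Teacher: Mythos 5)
Your proof is correct and follows essentially the same route as the paper: the same identity $\frac{1}{k+1}\binom{n}{k}=\frac{1}{n+1}\binom{n+1}{k+1}$ plus the binomial theorem, with the only (inessential) difference that you obtain part~1 by subtracting the $k=0$ term from part~2, whereas the paper evaluates the restricted sum $\sum_{i=1}^{n}$ directly. Your reading of $\E{\cdot \given X\ge 1}$ as the unnormalized restricted sum is also exactly how the paper's own proof treats it, so the interpretation matches.
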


\begin{proof} 
Observe 
\begin{align*} 
\E{ \frac{1}{X+1} \given X\geq 1} &= \sum_{i=1}^n {n \choose i}  \frac{p^i (1-p)^{n-i}}{i+1}\\
&= \frac{1}{p(n+1)} \sum_{i=1}^n { n+1 \choose i+1} p^{i+1} (1-p)^{n-i}\\
&= \frac{1- \left( 1-p\right)^{n+1}-(n+1)p(1-p)^n}{p(n+1)}.
\end{align*}
Similarly 
\begin{align*} 
\E{ \frac{1}{X+1}} &= \sum_{i=0}^n {n \choose i}  \frac{p^i (1-p)^{n-i}}{i+1}
= \frac{1}{p(n+1)} \sum_{i=0}^n { n+1 \choose i+1} p^{i+1} (1-p)^{n-i}
= \frac{1- \left( 1-p\right)^{n+1}}{p(n+1)}.
\end{align*}
\end{proof}

\begin{lemma}\label{ex of rec}
 Let $X \sim Bin(n,p)$. Then $$\E{ \frac{1}{X} \given X \geq 1} \leq \frac{1}{p(n+1)}\left(1 +\frac{3}{p(n+2)} \right).$$
 \end{lemma}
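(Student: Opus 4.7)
The plan is to bound $1/X$ pointwise by a combination of $1/(X+1)$ and $1/((X+1)(X+2))$, the first of which is already handled by \Cref{ex of rec p1} and the second by one more iteration of the same binomial telescoping trick. I would start from the exact decomposition $\frac{1}{X} = \frac{1}{X+1} + \frac{1}{X(X+1)}$ and look for the smallest constant $c$ such that $\frac{1}{X(X+1)} \leq \frac{c}{(X+1)(X+2)}$ for every integer $X \geq 1$. Cancelling a factor of $X+1$ reduces this to $X+2 \leq cX$, which holds for all $X\geq 1$ iff $c\geq 3$ (tight at $X=1$). Thus
\begin{equation*}
\frac{1}{X} \leq \frac{1}{X+1} + \frac{3}{(X+1)(X+2)} \qquad \text{for every integer } X \geq 1,
\end{equation*}
and taking expectations restricted to $\{X \geq 1\}$ gives
\begin{equation*}
\E{\frac{1}{X} \given X \geq 1} \leq \E{\frac{1}{X+1} \given X \geq 1} + 3\, \E{\frac{1}{(X+1)(X+2)} \given X \geq 1}.
\end{equation*}

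For the first term on the right, part 1 of \Cref{ex of rec p1} gives the exact value $\tfrac{1-(1-p)^{n+1} - (n+1)p(1-p)^n}{p(n+1)}$, which is at most $\tfrac{1}{p(n+1)}$ after dropping the two nonpositive correction terms. For the second term I would mimic the proof of \Cref{ex of rec p1} one level deeper, using the identity $\binom{n}{i}/((i+1)(i+2)) = \binom{n+2}{i+2}/((n+1)(n+2))$ to write
\begin{equation*}
\E{\frac{1}{(X+1)(X+2)}} = \frac{1}{p^2(n+1)(n+2)} \sum_{i=0}^{n} \binom{n+2}{i+2} p^{i+2}(1-p)^{n-i},
\end{equation*}
and then bound the (incomplete) binomial sum on the right by the full expansion of $(p + (1-p))^{n+2}=1$. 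This yields $\E{1/((X+1)(X+2))} \leq 1/(p^2(n+1)(n+2))$, and restricting the expectation to $\{X \geq 1\}$ only removes the nonnegative $i=0$ term, so the same upper bound still applies.

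Combining the two pieces gives
\begin{equation*}
\E{\frac{1}{X} \given X \geq 1} \leq \frac{1}{p(n+1)} + \frac{3}{p^2(n+1)(n+2)} = \frac{1}{p(n+1)}\left(1 + \frac{3}{p(n+2)}\right),
\end{equation*}
as required. The only genuinely nontrivial step is identifying the right coefficient $3$ in the pointwise inequality; once that is settled, both resulting expectations are controlled by essentially the same binomial manipulation already used in \Cref{ex of rec p1}. A conceivable alternative is to express $1/X$ via the integral $\int_0^1 y^{X-1}\,dy$ and compute the conditional expectation in closed form through the binomial theorem, but the resulting expression does not separate as cleanly into the advertised leading term plus a $1/(p(n+2))$ correction, so the elementary route above seems preferable.
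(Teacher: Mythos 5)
Your proof is correct and follows essentially the same route as the paper: the pointwise bound $\tfrac{1}{X} \le \tfrac{1}{X+1} + \tfrac{3}{(X+1)(X+2)}$ for $X \ge 1$, Lemma~\ref{ex of rec p1} for the first term, and the shifted binomial identity plus the binomial theorem for the second. The only difference is cosmetic: you derive the constant $3$ explicitly from $\tfrac{1}{X(X+1)} \le \tfrac{3}{(X+1)(X+2)}$, whereas the paper simply asserts the pointwise inequality.
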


\begin{proof} Note that when $X \geq 1$, $$ \frac{1}{X} \leq \frac{1}{X+1}+\frac{3}{(X+1)(X+2)}.$$ By Lemma \ref{ex of rec p1}, 
\begin{equation}\label{true} \E{\frac{1}{X+1}\given X \geq 1} \leq \frac{1}{p(n+1)}.
\end{equation}
We compute 
\begin{align*}
\E{\frac{1}{(X+1)(X+2)}\given X\geq 1}&= \sum_{i=1}^n \frac{{n \choose i} p^i (1-p)^{n-i}}{(i+1)(i+2)}\\
&= \frac{1}{p^2 (n+2)(n+1)} \sum_{i=1}^n {n +2 \choose i+2} p^{i+2}(1-p)^{n-i}\\
&\leq \frac{1}{p^2 (n+2)(n+1)}.
\end{align*}
Taking expectation of \Cref{true} gives $$\E{\frac{1}{X}\given X\geq 1} \leq \frac{1}{p(n+1)}\left(1 +\frac{3}{p(n+2)} \right).$$

 \end{proof}

\begin{proof} (of \Cref{bounds}.) For ease of notation, we ignore factors of $(1 + o(1))$, as described in \Cref{ignore}.
It follows from \Cref{conditional} in the proof of \Cref{exact} that
$$\frac{q}{k+1}\leq \E{C(v)|v \in k \text{ communities }}\leq \frac{q}{k}, $$ where the left inequality holds when $q(s-1)\geq 5$.

We now compute upper and lower bounds on $\E{C(v)}$, assuming $v$ is in some community. Let $M$ be the random variable indicating the number of communities containing $v$, $M \sim Bin\left(\frac{nd}{s(s-1) q}, \frac{s}{n}\right)$. It follows
\begin{align*}\E{C(v)}=\sum_{k=1}^{\frac{nd}{s^2q}}\Pr{M=k }\E{C(v)|M=k} \end{align*}
\begin{align*}q\E{\frac{1}{M+1} \given M \geq 1} \leq \E{C(v)}\leq  q\E{\frac{1}{M}\given M \geq 1}. \end{align*}
Applying Lemmas \ref{ex of rec p1} and \ref{ex of rec} to the lower and upper bounds respectively, we obtain 
$$\frac{q\left(1-\left(1-\frac{s}{n}\right)^{\frac{nd}{s(s-1)q}+1}- \brac{\frac{nd}{s(s-1)q}+1}\left(1-\frac{s}{n}\right)^{\frac{nd}{s(s-1)q}}\right)}{\frac{d}{(s-1)q}+\frac{s}{n}}\leq\E{C(v)}\leq \frac{q}{\frac{d}{(s-1)q} +\frac{s}{n}} \left( 1 + \frac{3}{\frac{d}{(s-1)q} +\frac{2s}{n}} \right) $$
which for $s=o(n)$ simplifies to
\begin{equation} \label{more general cc}
\lo \frac{(s-1)q^2}{d} \left( 1-\frac{nd}{s (s-1) q}e^{-d/((s-1)q)} \right) \leq \E{C(v) } \leq  \frac{(s-1)q^2}{d} \left( 1+ \frac{(s-1)q}{d}\right) \lo.
\end{equation}
Under the assumptions $s^2q =\omega(1)$ and $sq = o(d)$, we obtain our desired result $$\E{C(v)}=\left(1+o(1) \right)\left( \frac{sq^2}{d}\right).$$
\end{proof}

The following lemma will be used in the proof of \Cref{degree cc}.

\begin{lemma} \label{helper}
The $X$ be a nonnegative integer drawn from the discrete distribution with density proportional to $f(x)=x^{r-x}e^{-ax}$. Let $z=\argmax f(x)$. Then \[
\Pr{|x-z| \ge 2t\sqrt{z}} \le e^{-t+1}.
\]
\end{lemma}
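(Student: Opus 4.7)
The log-density $g(x) := \log f(x) = (r-x)\log x - ax$ has second derivative $g''(x) = -1/x - r/x^2 < 0$, so $f$ is strictly log-concave; taking $z$ as the real-valued mode (the integer mode differs by $O(1)$), the first-order condition $g'(z)=0$ gives $(r-z)/z = a+\log z$. The whole argument rests on bounding $f(z+m)/f(z)$: from
\[
g(z+m) - g(z) = (r-z)\log(1+m/z) - m\log(z+m) - am,
\]
$\log(1+m/z) \le m/z$ together with the mode identity makes the first two terms telescope, leaving $f(z+m)/f(z) \le (1+m/z)^{-m}$, and symmetrically $f(z-m)/f(z) \le (1-m/z)^{m}$ for $0 \le m < z$. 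Set $M := \lceil 2t\sqrt z\,\rceil$; since $e^{-t+1} \ge 1$ for $t \le 1$, I may assume $t \ge 1$.

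If $M \ge z$ (small-$z$ regime), the bound $(1+m/z)^{-m} \le 2^{-m}$ applies to every tail term, giving $\sum_{m \ge M} f(z+m)/f(z) \le 2^{1-M}$; combined with the trivial $Z \ge f(z)$ and the symmetric lower tail, $\Pr{|X-z| \ge M} \le 2^{2-M} \le e^{-t+1}$, using $M \ge 2t$ and $2\log 2 > 1$. If $M < z$ (large-$z$ regime), a sharper ratio bound is needed: for $u \in [z,2z]$ one has $|g''(u)| = 1/u + r/u^2 \ge (a+\log z+3)/(4z)$ (using $r/u^2 \ge r/(4z^2)$ and the mode identity), and Taylor then gives
\[
f(z+m)/f(z) \le \exp\!\big(-(a+\log z+3)\,m^2/(8z)\big) \qquad\text{for } m \in [0,z].
\]
Bounding the Gaussian-type tail sum by an integral and using the matching lower bound $Z \ge (\sigma/e)\,f(z)$ with $\sigma := \sqrt{2z/(a+\log z+2)}$ (which follows from $f(z+m)/f(z) \ge e^{-1}$ on an $\Omega(\sigma)$-window around $z$, via Taylor at the mode) yields
\[
\Pr{X \ge z+M} \le \frac{\sqrt 2\,e}{t\sqrt{a+\log z+3}}\,\exp\!\Big(-\tfrac{(a+\log z+3)\,t^2}{2}\Big),
\]
which, since $a+\log z+3 \ge 3$, is at most $\tfrac12 e^{-t+1}$ for every $t \ge 1$ by direct inspection (tight near $t = 1$, where $e^{-t+1}$ is itself close to $1$). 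Adding the symmetric lower-tail contribution completes the bound.

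The main technical obstacle is the large-$z$ regime. The naive bound $(1+m/z)^{-m} \le e^{-m^2/(2z)}$ yields only $e^{-2t^2}$ decay and cannot absorb the $\sqrt z$ factor coming from the Gaussian tail integral once $\log z$ grows. Incorporating the full curvature of $g$ (especially the $r/u^2$ contribution to $|g''|$) strengthens the exponent to $(a+\log z)\,m^2/z$; this is exactly what is needed to dominate the $\sqrt{a+\log z+3}$ overhead left after converting the $f(z)$-normalized tail sum back to a probability via the log-concave lower bound $Z \ge (\sigma/e) f(z)$.
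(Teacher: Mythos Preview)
Your argument is sound, but it is considerably more elaborate than the paper's. The paper does not split into small/large-$z$ regimes and never invokes the second derivative quantitatively. Instead it makes a single estimate
\[
\frac{f(z+\delta)}{f(z)} \;=\; (1+\delta/z)^{r-z-\delta}\,z^{-\delta}e^{-a\delta}
\;\le\; e^{\delta(r/z-1-a-\ln z)}\,e^{-\delta^2/z}
\;=\; e^{-\delta^2/z},
\]
using only $\ln(1+u)\le u$ and the first-order condition; at $\delta=\sqrt z$ this gives $f(z+\sqrt z)\le f(z)/e$. Then log-concavity is used as an \emph{extrapolation} device: from $f(z+\delta)\ge f(z)^{1-1/t}f(z+t\delta)^{1/t}$ one gets $f(z+t\sqrt z)\le f(z)e^{-t}$ for all $t\ge 1$, and with $Z\ge f(z)$ the pointwise bound $\Pr{X=z+t\sqrt z}\le e^{-t}$ follows.

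What your route buys is rigor at the last step. The paper passes from these point-mass bounds directly to the tail inequality, but summing $e^{-m/\sqrt z}$ over integers $m\ge 2t\sqrt z$ leaves an unaccounted factor of order $\sqrt z$; your observation that the true curvature is $(a+\log z+2)/z$ rather than $1/z$, together with the matching lower bound $Z\gtrsim \sigma f(z)$, is exactly what absorbs that factor. Conversely, what the paper's route buys is economy: the one-line extrapolation replaces your entire Taylor/Gaussian-tail computation, and if one supplements it with the same $Z\gtrsim\sigma f(z)$ bound you derive, it becomes fully rigorous with much less work than your large-$z$ case.
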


\begin{proof}
First we observe that $f$ is logconcave:
\[
\frac{d^2}{dx^2}\ln f(x) = \frac{d}{dx}(-a + \frac{r}{x}-1 - \ln x) = -\frac{r}{x^2}-\frac{1}{x}
\]
which is nonpositive for all $x \ge 0$. We will next bound the standard deviation of this density, so that we can use an exponential tail bound for logconcave densities.
To this end, we estimate $\max f$. Setting its derivative to zero, we see that at the maximum, we have 
\begin{equation}\label{eq:max-cond}
a+1= \frac{r}{x}-\ln x.
\end{equation}  
The maximizer $z$ is very close to 
\begin{equation} \label{z}
\frac{r}{(a+1)+\ln\frac{r}{(a+1)+\ln(r/(a+1))}},
\end{equation}
and the maximum value $z$ satisfies
$z^{r-z}e^{-az}=z^r e^{-r+z}$. 
Now we consider the point $z+\delta$ where $f(z+\delta)=f(z)/e$, i.e., 
\[
\frac{(z+\delta)^{r-z-\delta}e^{-az-a\delta}}{z^{r-z}e^{-az}} \le e^{-1}.
\]
The LHS is 
\begin{align*}
\left(1+\frac{\delta}{z}\right)^{r-z}z^{-\delta}\left(1+\frac{\delta}{z}\right)^{-\delta}e^{-a\delta} &\le e^{\delta(\frac{r}{z}-1-a-\ln z)}e^{-\frac{\delta^2}{z}}\\ &\le e^{-\frac{\delta^2}{z}}
\end{align*}
where in the second step we used the optimality condition (\ref{eq:max-cond}). 
Thus for $\delta = \lo \sqrt{z}$, $f(x + \delta) \leq f(x)/e$.
By logconcavity\footnote{which says that for any $x,y$ and any $\lambda \in [0,1]$, we have $f(\lambda x +(1-\lambda)y)\ge f(x)^\lambda f(y)^{1-\lambda}$} we have 
$$f(x+ \delta)= f\brac{ \brac{ 1-\frac{1}{t} } x + \frac{1}{t} ( x +t\delta) } \geq f(x)^{1-1/t} f(x+t\delta)^{1/t}$$
for any $t \geq 1$. 
It follows 
\begin{equation} \label{f(x)}
f(x+t\delta) \le f(x)/e^t 
\end{equation}
 for all $t$ (since we can apply the same argument for $z-\delta$).
Taking $x=z$ in \Cref{f(x)} and using  the observation $\sum_{x \in \Z^{+} } f(x) \geq f(z)$, it follows that $$\Pr{ x= z+ t \sqrt{z}}\leq  e^{-t} \quad \text{ and } \quad \Pr{ x= z- t \sqrt{z}}\leq e^{-t}$$ and so $$\Pr{|x-z| \ge t\sqrt{z}} \le 2 e^{-t}\le e^{-t+1}.$$ 
\end{proof}

\begin{proof} (of \Cref{degree cc}). 
Let $M$ denote the number of communities a vertex $v$ is selected to participate in. We can write
\begin{align*}
\E{C(v)|deg(v)=r} &= \sum_{k=\frac{r}{s}}^r \E{C(v)|deg(v)=r, M=k}\Pr{M=k|deg(v=r}\\
&=\sum_{k=\frac{r}{s}}^r \E{C(v)|deg(v)=r, M=k}\Pr{deg(v)=r|M=k}\frac{\Pr{M=k}}{\Pr{deg(v)=r}}.
\end{align*}
First we compute the expected clustering coefficient of a degree $r$ vertex given that it is $k$ communities: 
\begin{align*}
\E{C(v)|deg(v)=r \text{ and  }M=k}=\frac{ \sum_{i \not= j, i,j \in N(v)} q\left(\Pr{i, j \text{ part of same community} } \right)}{deg(v)\left(deg(v)-1\right)} =\frac{q}{k}.
\end{align*}
Next we note that $M$ is a drawn from a binomial distribution, and the degree of $v$ is drawn from a sum of $k$ binomials, each being $Bin(s,q)$.	Therefore,
\begin{align*}
\Pr{M=k}\Pr{deg(v)=r|M=k} &= {\frac{nd}{s(s-1)q} \choose k} \left(\frac{s}{n}\right)^k\left(1-\frac{s}{n}\right)^{\frac{nd}{s(s-1)q}-k} {sk \choose r} q^r(1-q)^{sk-r}.
\end{align*}
Using this we obtain
\begin{align}
	\E{C(v)|deg(v)=r} &= \frac{\sum_{k=\frac{r}{s}}^r \frac{q}{k}\Pr{M=k}\Pr{deg(v)=r|M=k}}{\sum_{k=\frac{r}{s}}^r\Pr{M=k}\Pr{deg(v)=r|M=k}}\nonumber \\
	&=\lo q \frac{\sum_{k=\frac{r}{s}}^r \frac{1}{k}\cdot \left(\frac{d}{(s-1)qk}\right)^ke^{-\frac{d}{(s-1)q}+\frac{sk}{n}} \left(\frac{skq}{r}\right)^re^{-qsk+qr}}{\sum_{k=\frac{r}{s}}^r\left(\frac{d}{(s-1)qk}\right)^ke^{-\frac{d}{(s-1)q}+\frac{sk}{n}} \left(\frac{skq}{r}\right)^re^{-qsk+qr}}\nonumber\\
	&=\lo 	q \frac{\sum_{k=\frac{r}{s}}^r \frac{1}{k}\cdot \left(\frac{d}{(s-1)q}\right)^k k^{r-k}e^{-qsk}}{\sum_{k=\frac{r}{s}}^r
	\left(\frac{d}{(s-1)q}\right)^k k^{r-k}e^{-qsk}}. \label{expression}
\end{align}		
Writing $a=qs-\ln(d/(s-1)q)$, this is 
\[
q \frac{\sum_{k=\frac{r}{s}}^r  \frac{1}{k}\cdot k^{r-k}e^{-ak}}{\sum_{k=\frac{r}{s}}^r k^{r-k}e^{-ak}}.
\]
Therefore \Cref{expression} is the same as $q\E{1/x}$ when $x$ is a nonnegative integer drawn from the discrete distribution with density proportional to $f(x)=x^{r-x}e^{-ax}$. We let $z$ be as in \Cref{z} of \Cref{helper}, so $z \approx \frac{r}{sq}$. We use \Cref{helper} 
to bound
 \begin{align*}
 \E{\left| \frac{1}{x}-\frac{1}{z} \right|} 
 &\leq  \sum_{t=1}^\infty  \left(\frac{1}{z}-\frac{1}{z+ t \sqrt{z} }\right)e^{-t} +\sum_{t=1}^{\sqrt{z}-1} \left(\frac{1}{z- t \sqrt{z} }- \frac{1}{z}\right)e^{-t}  \\
  &=  \sum_{t=1}^\infty \frac{t\sqrt{z}e^{-t}}{z(z+t\sqrt{z})}+\sum_{t=1}^{\sqrt{z}-1} \frac{t\sqrt{z}e^{-t}}{z(z-t\sqrt{z})}\\
    &\leq  \frac{1}{z} \sum_{t=1}^\infty \frac{te^{-t}}{\sqrt{z}+1}+\frac{\sqrt{z}}{z} \brac{ \sum_{t=1}^{\sqrt{z}/3} \frac{3te^{-t}}{2z} +\sum_{t=\sqrt{z}/3}^{\sqrt{z}-1} te^{-t}  }\\
  &=  \frac{O(1)}{z\sqrt{z} } + \frac{O(1)}{ z\sqrt{z}} + O\brac{ \frac{\sqrt{z}}{3} e^{-\frac{\sqrt{z}}{3} } } = \frac{O(1)}{z \sqrt{z}}.
\end{align*}
Using this and approximating $z$ by $\frac{r}{sq}$, the expectation of $x$ with respect to the density proportional to $f$ can be estimated:
\[
q \E{\frac{1}{x}}=\frac{q}{z}\left(1+O\left(\frac{1}{\sqrt{z}}\right)\right)=\lo\frac{sq^2}{r}\left(1+O\left(\sqrt{\frac{sq}{r}}\right)\right)=(1+o_r(1))\frac{sq^2}{r}
\]
as claimed.


	 \end{proof}
	 
	 \section{Section \ref{extension} proofs}\label{ex proofs}

	 \begin{proof} (of \Cref{ex cc}.) Let $v$ be a vertex with target degree $t= t(v)$, and let $k$ denote the number communities containing $v$.  First we claim $deg(v)\sim Bin \left((s-1)k, \frac{t q}{s} \right)$.  Let $s$ be an arbitrary vertex of a community $S$ containing $v$. 
	 	 $$\Pr{ s \sim v \text{ in } S} = \sum_{u \in V} \Pr{s=u} \Pr{v \sim u \text{ in } S} = \sum_{u \in V} \frac{1}{n} \frac{ t(u) t q}{ds}= \frac{tq}{s}.$$
	 	 A vertex in $k$ communities has the potential to be adjacent to $(s-1)k$ other vertices, and each adjacency occurs with probability $t q /s$.
	 	 
	 	 Next, let $N_u$ be the event that a randomly selected neighbor of vertex $v$ is vertex $u$. We compute
	 	 \begin{align}
	 	 \Pr{N_u}&= \sum_{r} \frac{\Pr{ u \sim v \given deg(v)=r}\Pr{deg(v)=r}}{r}\nonumber \\
	 	 &=\sum_r \frac{ \Pr{u \sim v} \Pr{deg(v)=r \given u \sim v}}{r}\nonumber\\
	 	 &=\Pr{u \sim v} \E { \frac{1}{deg(v)} \given u \sim v}\nonumber\\
	 	 &= \lo  \bfrac{s}{n}^2\frac{n}{(s-1)q} \frac{ t(u)t q}{sd} \left(\frac{1-e^{-tqk}}{tk q}\right)\label{deg ex} \\
	 	 &= \lo  \frac{t(u)\brac{1-e^{-tqk}}}{qkdn}\nonumber.
	 	 \end{align}
	 	To see \Cref{deg ex}, note that by the first claim $\E { \frac{1}{deg(v)} \given u \sim v}= \E{\frac{1}{X+1}}$ where $X \sim Bin\left((s-1)k -1, \frac{tq }{s}\right)$. Applying \Cref{ex of rec p1} and assuming $s=\omega(1)$, we obtain 
	 	$$\E { \frac{1}{deg(v)} \given u \sim v}=\frac{1-(1-\frac{tq }{s})^{(s-1)k }}{((s-1)k )\frac{tq }{s}}= \lo \frac{1-e^{-tqk}}{tk q}.$$
	 	
Now we compute the expected clustering coefficient conditioned on the number of communities the vertex is part of under the assumption that $s/n=o(q)$. Observe \begin{align}
	 	\E {C(v) \given \text{$v$ in $k$ communities}}&=\sum_{u,w} N_u N_w\Pr{ u \sim w \given u \sim v \text{ and } w \sim v }\nonumber \\
	 	&=\sum_{u,w} \frac{t(u)t(w)\brac{1-e^{-tqk}}^2}{(qkdn)^2}\brac{ \frac{1}{k} + \bfrac{s}{n}^2\frac{n}{(s-1)q} } \frac{t(u)t(w)q}{sd}\nonumber \\
	 	&= \lo \frac{\brac{1-e^{-tqk}}^2 \brac{\sum_{u \in V} t(u)^2}^2}{qd^3k^3n^2s}. \label{cc droc}
	 	\end{align}

Next compute the expected clustering coefficient without conditioning on the number of communities. To do so we need to compute the expected value of the function $f(k)=\frac{(1- e^{-kqt})^2}{k^3}$. We first use Taylor's theorem to give bounds on $f(k)$. For all $k$, there exists some $z\in [1/q, k]$ such that
\begin{equation*}
f(k)= f\bfrac{1}{q} + f'\bfrac{1}{q} \brac{k-\frac{1}{q}} + \frac{ f''(z)}{2}\brac{k-\frac{1}{q}}^2.
\end{equation*}
Note that for $z \in [1/q, k]$ 
\begin{align*} f''(z)&= \frac{12 (1 - e^{-k q t})^2}{k^5} - \frac{
 12 e^{-k q t}(1 - e^{-k q t}) q t}{k^4} + \frac{
 2 e^{-2 k q t} q^2 t^2}{k^3} - \frac{
 2 e^{-k q t} (1 - e^{-k q t}) q^2 t^2}{k^3}\\
 &\leq \frac{12 (1 - e^{-k q t})^2}{k^5} + \frac{
 2 e^{-2 k q t} q^2 t^2}{k^3}\\
 &\leq q^5\brac{ 12+ 2t^2e^{-2t}},  
 \end{align*}
and $$f''(z) \geq 0.$$
It follows that 
\begin{equation} \label{taylor} 
f\bfrac{1}{q} + f'\bfrac{1}{q} \brac{k-\frac{1}{q}}  \leq f(k)\leq  f\bfrac{1}{q} + f'\bfrac{1}{q} \brac{k-\frac{1}{q}} + q^5\brac{ 6+ t^2e^{-2t}}\brac{k-\frac{1}{q}}^2.
\end{equation}

 Let $M \sim Bin(n/(sq), s/n)$ be the random variable for the number of communities a vertex $v$ is part of. (Since $s= \omega(1)$ replacing the number of communities by $n/(sq)$ changes the result by a factor of $\lo$.) We use \Cref{taylor} to give bounds on the expectation of $f(M)$, 
 
 \begin{align*}
 \E{f(M)} &\leq  \E{ f\bfrac{1}{q} + f'\bfrac{1}{q} \brac{M-\frac{1}{q}} + q^5\brac{ 12+ 2t^2e^{-2t}}\brac{M-\frac{1}{q}}^2}\\
 &=(1- e^{-t})^2q^3 + \frac{1}{q}\brac{1-\frac{s}{n}} q^5 \brac{ 6+ t^2e^{-2t}}\\
 &\leq (1- e^{-t})^2q^3 + q^4 \brac{ 6+ t^2e^{-2t}}
 \end{align*}
 and 
 \begin{align*}
 \E{f(M)} \geq  \E{ f\bfrac{1}{q} + f'\bfrac{1}{q} \brac{M-\frac{1}{q}} }
 &=(1- e^{-t})^2q^3. \end{align*} 
Therefore $ \E{f(M)}  = (1- e^{-t})^2q^3 + c_t q^4 $ for some constant $c_t \in [0, 6.2)$.

Finally, we compute
		\begin{align*} 
		\E {C(v)}&= \sum_k \Pr{M=k}\frac{\brac{1-e^{-tqk}}^2 \brac{\sum_{u \in V} t(u)^2}^2}{qd^3k^3n^2s}\\
		&= \frac{\brac{\sum_{u \in V} t(u)^2}^2}{qd^3n^2s} \E {f(M)}\\
		&= \lo \frac{\brac{\sum_{u \in V} t(u)^2}^2}{d^3n^2s} \brac{ (1- e^{-t})^2q^2 +c_tq^3}.
		\end{align*}

%
	 	\end{proof}

\begin{proof} (of \Cref{power law}.) Let $d= mean(D)$. We compute $$ \E {d} = \sum_{k=1}^\infty \frac{k^{-\gamma+1}}{\zeta(\gamma)} = \frac{\zeta(\gamma-1)}{\zeta(\gamma)}.$$
	
	Next we claim that with high probability the maximum target degree of a vertex is at most $t_0=n^{2/(\gamma-1)}$. Let $X$ be the random variable for the number of indices $i$ with $t(v_i)>k_0$. 
	
	\begin{align*}
	\Pr{  \max_i t(v_i) > t_0 } &\leq \E {X }= n \Pr{ t(v_1) > t_0} \leq n \sum_{i=t_0+1}^{\infty} \frac{i^{-\gamma}}{\zeta(\gamma)}\\
	&\leq n\int_{i=t_0}^{\infty}  \frac{i^{-\gamma}}{\zeta(\gamma)}= \bfrac{1}{\zeta(\gamma)(\gamma-1)} n t_0^{1-\gamma}=o(1).
	\end{align*}
	It follows that $ \max_i t(v_i)^2 \leq n^{\frac{1}{\gamma-1}}$, and so $\max_i t(v_i)^2 \leq \frac{sd}{q}$.
\end{proof}

	 \end{document}